\providecommand{\algorithmname}{Algorithm}
 \theoremstyle{definition}
  \newtheorem{example}{\protect\examplename}
  \theoremstyle{plain}
  \newtheorem{lemma}{\protect\lemmaname}
\theoremstyle{plain}
\newtheorem{theorem}{\protect\theoremname}
\theoremstyle{plain}
\newtheorem{assumption}{\protect\assumptionname}
\theoremstyle{plain}
\theoremstyle{plain}
\newtheorem{property}{\protect\propertyname}
\theoremstyle{plain}
\theoremstyle{plain}
\newtheorem{proposition}{\protect\propositionname}
\theoremstyle{plain}
\providecommand{\examplename}{Example}
\providecommand{\lemmaname}{Lemma}
\providecommand{\theoremname}{Theorem}
\providecommand{\assumptionname}{Assumption}
\providecommand{\propertyname}{Property}
\providecommand{\propositionname}{Proposition}
\begin{document}

\title{Unbiased Hamiltonian Monte Carlo with couplings}

\author{Jeremy Heng$^{*}$ and Pierre E. Jacob\thanks{Department of Statistics, Harvard University, USA. Emails: jjmheng@fas.harvard.edu \& pjacob@fas.harvard.edu.}}
\maketitle
\begin{abstract}
We propose a methodology to parallelize Hamiltonian Monte Carlo
estimators. Our approach constructs a pair of Hamiltonian Monte Carlo 
chains that are coupled in such a way that they meet exactly after some random number of iterations. 
These chains can then be combined so that resulting estimators are unbiased.
This allows us to produce independent replicates in parallel and average them 
to obtain estimators that are consistent in the limit of the number of replicates,
instead of the usual limit of the number of Markov chain iterations. 
We investigate the scalability of our coupling in high dimensions on a toy example. 
The choice of algorithmic parameters and the efficiency of our proposed methodology 
are then illustrated on a logistic regression with $300$ covariates, 
and a log-Gaussian Cox point processes model with low to fine grained discretizations. 
\end{abstract}
\textbf{\small{}Keywords}{\small{}: Coupling, Hamiltonian Monte Carlo, Parallel computing, Unbiased estimation.}{\small \par}

\section{Introduction \label{sec:Context-and-goal}}

\subsection{Parallel computation with Hamiltonian Monte Carlo \label{subsec:Parallelizing-Hamiltonian-Monte}}
Hamiltonian Monte Carlo is a Markov chain Monte Carlo method to approximate
integrals with respect to a target probability distribution $\pi$ on
$\mathbb{R}^{d}$.  Originally proposed by \citet{Duane:1987} in the physics
literature, it was later introduced in statistics by \citet{Neal:1993} and is
now widely adopted as a standard sampling tool
\citep{brooks2011handbook,lelievre2012}. Various aspects of its theoretical properties
have been studied: 
see \citet{betancourtetal:2017} and \citet{betancourt:2017} for its geometric properties, 
\citet{Livingstone:2016} and \citet{durmus2017convergence} for ergodicity results, 
\citet{beskos2013optimal}, \citet{Mangoubi:2017} and \citet{bourabee:2018}
for scaling results with respect to the dimension $d$. These results suggest 
that Hamiltonian Monte Carlo compares favorably to other Markov 
chain Monte Carlo algorithms such as random walk Metropolis--Hastings and 
Metropolis-adjusted Langevin algorithms in high dimensions.
In practice, Hamiltonian Monte Carlo is at the core of the No-U-Turn sampler
\citep{hoffman2014no} which is implemented in the software Stan
\citep{carpenter2016stan}. 

If one could initialize from the target distribution, usual estimators based on
any Markov chain Monte Carlo would be unbiased, and one could simply average
over independent chains \citep{rosenthal2000parallel}.  Except certain
applications where this can be achieved with perfect simulation methods
\citep{casella:lavine:robert:2001,huber2016perfect}, Markov
chain Monte Carlo estimators are ultimately consistent in the limit of the number of
iterations. Algorithms that rely on such asymptotics face the risk of becoming
obsolete if computational power continue to increase through the number of
available processors and not through clock speed. 

Several methods have been proposed to address this limitation with varying
generality \citep{mykland1995regeneration,neal2002circularly,glynn2014exact}.
Our approach builds upon recent work by \citet{jacob2017unbiased},
which introduces unbiased estimators based on Metropolis--Hastings algorithms and Gibbs samplers.
The present article describes how to design unbiased estimators 
for Hamiltonian Monte Carlo and some of its variants
\citep{girolami2011riemann}.  The proposed methodology is widely applicable and
involves a simple coupling between a pair of Hamiltonian Monte Carlo chains.
Coupled chains are run for a random but almost surely finite number of iterations, and
combined in such a way that resulting estimators are unbiased.  One can
produce independent copies of these estimators in parallel and average them
to obtain consistent approximations in the limit of the number of replicates.
This also yields confidence intervals valid in the number of replicates through
the central limit theorem; see also \citet{Glynn1991} for central limit theorems
parametrized by number of processors or time budget.
 
We begin by introducing some preliminary notation in Section \ref{subsec:Notation} and 
recapitulating the unbiased estimation framework of \citet{jacob2017unbiased} in Section \ref{subsec:Context:-unbiased-estimation}.  

\subsection{Notation\label{subsec:Notation}}

Given a sequence $(x_n)_{n\geq0}$ and integers $k< m$, we use the convention that $\sum_{n=m}^k x_n=0$.
The set of natural numbers is denoted by $\mathbb{N}$ and the set of non-negative real numbers by $\mathbb{R}_{+}$.
The $d$-dimensional vector of zeros is denoted by $0_{d}$ and the $d\times d$ identity matrix by $I_d$. 
The Euclidean norm of a vector $x\in\mathbb{R}^{d}$ is written as 
$|x|=(\sum_{i=1}^{d}x_{i}^{2})^{1/2}$. 
Given a subset $A\subseteq\varOmega$, the indicator function $\mathbb{I}_A:\varOmega\rightarrow\{0,1\}$ is defined as $\mathbb{I}_A(x)=1$ if $x\in A$, and $0$ if $x\in\varOmega\setminus A$. 
For a smooth function $f:\mathbb{R}^d\rightarrow\mathbb{R}$, we denote its gradient by 
$\nabla f:\mathbb{R}^d\rightarrow\mathbb{R}^d$ and its Hessian by $\nabla^2f:\mathbb{R}^d\rightarrow\mathbb{R}^{d\times d}$.
The gradient of a function $(x,y)\mapsto f(x,y)$ with respect to the variables $x$ and $y$
are denoted by $\nabla_{x}f$ and $\nabla_{y}f$ respectively.
Given functions $f:\mathbb{R}^n\rightarrow\mathbb{R}^m$ and $g:\mathbb{R}^d\rightarrow\mathbb{R}^n$, we define the 
composition $f\circ g:\mathbb{R}^d\rightarrow\mathbb{R}^m$ as $(f\circ g)(x)=f\{g(x)\}$ for all $x\in\mathbb{R}^d$.
The Borel $\sigma$-algebra of $\mathbb{R}^{d}$ is denoted by $\mathcal{B}(\mathbb{R}^{d})$; 
on the product space $\mathbb{R}^d\times\mathbb{R}^d$, $\mathcal{B}(\mathbb{R}^{d})\times\mathcal{B}(\mathbb{R}^{d})$ denotes the product $\sigma$-algebra.
The Gaussian distribution on $\mathbb{R}^d$ with mean vector $\mu$ and covariance matrix $\Sigma$
is denoted by $\mathcal{N}(\mu,\Sigma)$, and its density by $x\mapsto\mathcal{N}(x;\mu,\Sigma)$.
The uniform distribution on $[0,1]$ is denoted as $\mathcal{U}[0,1]$. We use the shorthand $X\sim\eta$ to refer to a random variable with distribution $\eta$. 
On a measurable space $(\varOmega,\mathcal{F})$, given a measurable function $\varphi:\varOmega\rightarrow\mathbb{R}$, a probability measure $\eta$, and a Markov transition kernel $M$, we define the integral $\eta(\varphi)=\int_{\varOmega}\varphi(x)\eta(dx)$ and the function $M(\varphi)(x)=\int_{\varOmega}\varphi(y)M(x,dy)$ for $x\in\varOmega$.

\subsection{Unbiased estimation with couplings \label{subsec:Context:-unbiased-estimation}}

Suppose $h:\mathbb{R}^d\rightarrow\mathbb{R}$ is a measurable function of interest and consider 
the task of approximating the integral $\pi(h)=\int h(x)\pi(dx)<\infty$. 
Following \citet{glynn2014exact} and \citet{jacob2017unbiased}, we will construct a pair of coupled Markov chains $X=(X_{n})_{n\geq0}$ and $Y=(Y_{n})_{n\geq0}$
with the same marginal law, associated with an initial distribution $\pi_{0}$ and a $\pi$-invariant Markov transition kernel $K$ defined on $\{\mathbb{R}^d, \mathcal{B}(\mathbb{R}^d)\}$. To do so, we introduce a Markov transition kernel 
$\bar{K}$ on $\{\mathbb{R}^d\times\mathbb{R}^d, \mathcal{B}(\mathbb{R}^d)\times\mathcal{B}(\mathbb{R}^d)\}$ that admits 
$K$ as its marginals, i.e. $\bar{K}\{(x,y),A\times\mathbb{R}^d\}=K(x,A)$ and $\bar{K}\{(x,y),\mathbb{R}^d\times A\}=K(y,A)$ 
for all $x,y\in\mathbb{R}^d$ and $A\in\mathcal{B}(\mathbb{R}^d)$. After initializing $(X_0,Y_0)\sim\bar{\pi}_0$ with a coupling that has $\pi_0$ as its marginals, we then simulate $X_1\sim K(X_0,\cdot)$ and $(X_{n+1},Y_n)\sim\bar{K}\{(X_n,Y_{n-1}),\cdot\}$ for all integer $n\geq 1$. We will write pr to denote the law of the coupled chain $(X_{n},Y_{n})_{n\geq0}$, and $E$ to denote
expectation with respect to pr. We now consider the following assumptions.

\begin{assumption}[Convergence of marginal chain]\label{ass:convergence}
As $n\to\infty$, we have $E\{h(X_{n})\}\to\pi(h)$. Furthermore, 
there exist $\kappa_1>0$ and $C_1<\infty$ such that 
$E\{h(X_{n})^{2+\kappa_1}\}<C_1$ for all integer $n\geq0$.
\end{assumption}
\begin{assumption}[Tail of meeting time]\label{ass:tail}
The meeting time $\tau=\inf\{ n\geq1:\;X_{n}=Y_{n-1}\}$
satisfies a geometric tail condition
of the form pr$(\tau>n)\leq C_2\kappa_2^{n}$ 
for some constants $C_2\in\mathbb{R}_+,\kappa_2\in(0,1)$ and all integer $n\geq0$.
\end{assumption}
\begin{assumption}[Faithfulness]\label{ass:faithfulness}
The coupled chains are faithful \citep{rosenthal1997faithful}, i.e. 
$X_{n}=Y_{n-1}$ for all integer $n\geq\tau$.
\end{assumption}

Under these assumptions, the random variable defined as
\begin{align}
H_{k}(X,Y)=h(X_{k})+\sum_{n=k+1}^{\tau-1}\left\{ h(X_{n})-h(Y_{n-1})\right\}\label{eq:Hk}
\end{align}
for any integer $k\geq0$, is an unbiased estimator of $\pi(h)$ with finite variance \citep[Proposition 3.1]{jacob2017unbiased}. 
Computation of (\ref{eq:Hk}) can be performed with $\tau-1$ applications of $\bar{K}$ and $\max(1,k+1-\tau)$ applications of $K$; thus the compute cost has a finite expectation under Assumption \ref{ass:tail}. 
The first term, $h(X_{k})$, is in general biased since the chain $(X_{n})_{n\geq0}$ might not have
reached stationarity by iteration $k$. The second term acts as a bias correction and is equal to zero when $k\geq\tau-1$. 

As the estimators $H_{k}(X,Y)$, for various values of $k$, can be computed 
from a single realization of the coupled chains, this prompts the definition of a 
time-averaged estimator $H_{k:m}(X,Y)=(m-k+1)^{-1}\sum_{n=k}^mH_n(X,Y)$ for integers $k\leq m$. The latter inherits the unbiasedness and finite variance properties, and can be rewritten as 
\begin{align}
H_{k:m}(X,Y)= M_{k:m}(X) + 
\sum_{n=k+1}^{\tau-1}\min\left(1,\frac{n-k}{m-k+1}\right)\left\{ h(X_{n})-h(Y_{n-1})\right\}\label{eq:Hkm}
\end{align}
where $M_{k:m}(X)=(m-k+1)^{-1}\sum_{n=k}^{m}h(X_{n})$ can be viewed as the usual Markov chain estimator with $m$ iterations and a burn-in period of $k-1$. 
As before, the second term plays the role of bias correction and is equal to zero when $k\geq\tau-1$. Hence if the value of $k$ is sufficiently large, we can expect the variance of 
$H_{k:m}(X,Y)$ to be close to that of $M_{k:m}(X)$.
Moreover, the cost of computing (\ref{eq:Hkm}), which involves $\tau-1$ applications of $\bar{K}$ and $\max(1,m+1-\tau)$ applications of $K$, becomes comparable to $m$ iterations under $K$ for sufficiently large $m$. 
Therefore we can expect the asymptotic inefficiency of $H_{k:m}(X,Y)$ in the limit of our computational budget, given by the product of the expected compute cost and the variance of $H_{k:m}(X,Y)$ \citep{glynn1992asymptotic}, to approach the asymptotic variance of the underlying Markov chain as $m$ increases. 
We refer to \citet[Section 3.1]{jacob2017unbiased} for a more detailed discussion on the impact of $k$ and $m$, and recall their proposed guideline of having $k$ as a large quantile of the meeting time $\tau$ and $m$ as a large multiple of $k$.

In practice, our proposed methodology involves simulating $R$ pairs of coupled Markov chains $(X^{(r)},Y^{(r)})=(X_{n}^{(r)},Y_{n}^{(r)})_{n\geq0},r=1,\ldots,R$ completely in parallel, with each pair taking a random compute time depending on their meeting time.
As this produces $R$ independent replicates $H_{k:m}(X^{(r)},Y^{(r)}),
r=1,\ldots,R$ of the unbiased estimator (\ref{eq:Hkm}), 
one can compute the average $R^{-1}\sum_{r=1}^RH_{k:m}(X^{(r)},Y^{(r)})$ to approximate $\pi(h)$. 
By appealing to the usual central limit theorem for independent and identically distributed random variables, 
confidence intervals that are justified as $R\rightarrow\infty$ 
can also be constructed. 

Explicit constructions of coupled chains satisfying Assumptions
\ref{ass:convergence}--\ref{ass:faithfulness} for Markov kernels $K$ that are
defined by Metropolis--Hastings algorithms and Gibbs samplers are given in
\citet[Section 4]{jacob2017unbiased} and \citet{jacob_smoothing2018}.  The focus of this article is to propose
a coupling strategy that is tailored for Hamiltonian Monte Carlo chains, so as
to enable the use of unbiased estimators (\ref{eq:Hk})--(\ref{eq:Hkm}).
We will illustrate in Section \ref{sec:Numerical-illustrations} that
this approach applies to realistic settings and retains 
the benefits of Hamiltonian Monte Carlo 
in terms of scaling with dimension.

\section{Hamiltonian dynamics \label{sec:Hamilton's-equations}}
\subsection{Hamiltonian flows \label{subsec:Hamiltonian-dynamics}}
Suppose that the target distribution has the form 
$\pi(dq)\propto\exp\{-U(q)\}dq$,
where the potential function $U:\mathbb{R}^{d}\rightarrow\mathbb{R}_{+}$ satisfies the following assumptions.
\begin{assumption}[Regularity and growth of potential]\label{ass:potential}
The potential $U$ is twice continuously differentiable and its gradient $\nabla U:\mathbb{R}^d\rightarrow\mathbb{R}^d$ is globally $\beta$-Lipschitz, i.e. there exists $\beta>0$ such that $|\nabla U(q)-\nabla U(q')|\leq\beta|q-q'|$ for all $q,q'\in\mathbb{R}^{d}$. 
\end{assumption}
These assumptions imply at most quadratic growth of the potential, or equivalently 
that the tails of the target distribution are no lighter than Gaussian.

We now introduce Hamiltonian flows
on the phase space $\mathbb{R}^{d}\times\mathbb{R}^d$, which consists of position variables
$q\in\mathbb{R}^{d}$ and momentum variables $p\in\mathbb{R}^{d}$.
We will be concerned with a Hamiltonian function $\mathcal{E}:\mathbb{R}^{d}\times\mathbb{R}^{d}\rightarrow\mathbb{R}_{+}$
of the form 
$\mathcal{E}(q,p)=U(q)+|p|^{2}/2$.
We note the use of the identity mass matrix here and will rely on preconditioning in Section \ref{subsec:Cox-Process} to incorporate curvature properties of $\pi$. The time evolution of a particle $\{q(t),p(t)\}_{t\in\mathbb{R}_{+}}$
under Hamiltonian dynamics is described by the
ordinary differential equations 
\begin{align}\label{eq:hamilton_ode}
\frac{d}{dt}q(t) &= \nabla_{p}\mathcal{E}\{q(t),p(t)\} = p(t),\quad
\frac{d}{dt}p(t) = -\nabla_{q}\mathcal{E}\{q(t),p(t)\} = -\nabla U\{q(t)\}.
\end{align}
Under Assumption \ref{ass:potential}, (\ref{eq:hamilton_ode}) with
an initial condition $\{q(0),p(0)\}=(q_{0},p_{0})\in\mathbb{R}^{d}\times\mathbb{R}^{d}$
admits a unique solution globally on $\mathbb{R}_{+}$ \citep[p. 14]{lelievre2012}.
Therefore the flow map $\Phi_{t}(q_{0},p_{0})=\{q(t),p(t)\}$ is well-defined 
for any $t\in\mathbb{R}_{+}$, and we will write its projection onto the position
and momentum coordinates as $\Phi_{t}^{\circ}(q_{0},p_{0})=q(t)$ 
and $\Phi_{t}^{*}(q_{0},p_{0})=p(t)$ respectively. 

It is worth recalling that
Hamiltonian flows have the following properties.
\begin{property}[Reversibility]\label{property:reversibility}
For any $t\in\mathbb{R}_{+}$, the inverse flow map satisfies $\Phi_{t}^{-1}=M\circ\Phi_{t}\circ M$, where $M(q,p)=(q,-p)$ denotes momentum reversal.
\end{property}
\begin{property}[Energy conservation]\label{property:energy} The Hamiltonian function satisfies 
$\mathcal{E}\circ\Phi_{t}=\mathcal{E}$ for any $t\in\mathbb{R}_{+}$.
\end{property}
\begin{property}[Volume preservation]\label{property:volume}
For any $t\in\mathbb{R}_{+}$ and $A\in\mathcal{B}(\mathbb{R}^{2d})$, we have $\mathrm{Leb}_{2d}\{\Phi_{t}(A)\}=\mathrm{Leb}_{2d}(A)$, where $\mathrm{Leb}_{2d}$ denotes the Lebesgue measure on $\mathbb{R}^{2d}$. 
\end{property}
These properties imply that the extended target distribution on phase space
$\tilde{\pi}(dq,dp)\propto\exp\{-\mathcal{E}(q,p)\}dqdp$
is invariant under the Markov semi-group induced by the flow, i.e. 
for any $t\in\mathbb{R}_{+}$, the pushforward measure $\Phi_{t}\sharp\tilde{\pi}$, defined as $\Phi_{t}\sharp\tilde{\pi}(A)=\tilde{\pi}\{\Phi_{t}^{-1}(A)\}$ for $A\in\mathcal{B}(\mathbb{R}^{2d})$, is equal to $\tilde{\pi}$.

\subsection{Coupled Hamiltonian dynamics \label{subsec:Coupled-Hamiltonian-dynamics}}
We now consider the coupling of two particles $\{q^{i}(t),p^{i}(t)\}_{t\in\mathbb{R}_{+}},\ (i=1,2)$
evolving under (\ref{eq:hamilton_ode}) with initial conditions $\{q^{i}(0),p^{i}(0)\}=(q_{0}^{i},p_{0}^{i}),\ (i=1,2)$.
We first draw some insights from a Gaussian example. 
\begin{example}
Let $\pi$ be a Gaussian distribution on $\mathbb{R}$ with mean $\mu\in\mathbb{R}$
and variance $\sigma^{2}>0$. In this case, we have $U(q)=(q-\mu)^{2}/(2\sigma^{2}), \nabla U(q)=(q-\mu)/\sigma^{2}$ and the solution of (\ref{eq:hamilton_ode})
is 
\begin{align*}
\Phi_{t}(q_{0},p_{0})=\left(\begin{array}{c}
\mu+(q_{0}-\mu)\cos\left(\frac{t}{\sigma}\right)+\sigma p_{0}\sin\left(\frac{t}{\sigma}\right)\\
p_{0}\cos\left(\frac{t}{\sigma}\right)-\frac{1}{\sigma}(q_{0}-\mu)\sin\left(\frac{t}{\sigma}\right)
\end{array}\right).
\end{align*}
Hence the difference between particle positions is  
\begin{align*}
q^{1}(t)-q^{2}(t) &= (q_{0}^{1}-q_{0}^{2})\cos\left(\frac{t}{\sigma}\right)+\sigma(p_{0}^{1}-p_{0}^{2})\sin\left(\frac{t}{\sigma}\right).
\end{align*}
If we set $p_{0}^{1}=p_{0}^{2}$, then $|q^{1}(t)-q^{2}(t)|=|\cos(t/\sigma)|\,|q_{0}^{1}-q_{0}^{2}|$, so for any non-negative integer $n$, the particles meet exactly whenever $t=(2n+1)\pi\sigma/2$, and contraction occurs for any $t\neq\pi n\sigma$. 
\end{example}

This example motivates a coupling that simply assigns particles the
same initial momentum. Moreover, it also reveals that certain trajectory
lengths will result in larger contraction than others. We now examine
the utility of this approach more generally. Define $\Delta(t)=q^{1}(t)-q^{2}(t)$
as the difference between particle locations and note that 
\begin{align*}
\frac{1}{2}\frac{d}{dt}|\Delta(t)|^{2} = \Delta(t)^{\top}\left\lbrace p^{1}(t)-p^{2}(t)\right\rbrace.
\end{align*}
Therefore by imposing that $p^{1}(0)=p^{2}(0)$, the function $t\mapsto|\Delta(t)|$
admits a stationary point at time $t=0$. This is geometrically intuitive
as the trajectories at time zero are parallel to one another for an
infinitesimally small amount of time. To characterize this stationary
point, we compute 
\begin{align*}
\frac{1}{2}\frac{d^{2}}{dt^{2}}|\Delta(t)|^{2}=-\Delta(t)^{\top}\left[\nabla U\{q^{1}(t)\}-\nabla U\{q^{2}(t)\}\right] + |p^{1}(t)-p^{2}(t)|^{2} 
\end{align*}
and consider the following assumption. 
\begin{assumption}[Local convexity of potential]\label{ass:convexity}
There exists a compact set $S\in\mathcal{B}(\mathbb{R}^{d})$,
with positive Lebesgue measure, such that the restriction of $U$ to
$S$ is $\alpha$-strongly convex, i.e. there exists $\alpha>0$ such that 
$\left(q-q'\right)^{\top}\left\lbrace\nabla U(q)-\nabla U(q')\right\rbrace\geq\alpha|q-q'|^{2}$ 
for all $q,q'\in S$. 
\end{assumption}
Under Assumption \ref{ass:convexity}, we have 
\begin{align*}
\frac{1}{2}\frac{d^{2}}{dt^{2}}|\Delta(0)|^{2}\leq-\alpha|\Delta(0)|^{2}+|p^{1}(0)-p^{2}(0)|^{2} 
\end{align*}
if $q_{0}^{1},q_{0}^{2}\in S$ and $q_{0}^{1}\neq q_{0}^{2}$. 
Therefore by taking $p^{1}(0)=p^{2}(0)$, it follows from the second derivative test that 
$t=0$ is a strict local maximum point.
Continuity of $t\mapsto|\Delta(t)|^{2}$ implies that there
exists a trajectory length $T>0$ such that for any $t\in(0,T]$, there exists 
$\rho\in[0,1)$ satisfying 
\begin{align}\label{eq:exact_contract_small_t}
|\Phi_{t}^{\circ}(q_{0}^{1},p_{0})-\Phi_{t}^{\circ}(q_{0}^{2},p_{0})|\leq\rho|q_{0}^{1}-q_{0}^{2}|.
\end{align}
We note the dependence of $T$ on the initial positions $q_{0}^{1}, q_{0}^{2}$ and momentum
$p_{0}$.  We now strengthen the above claim. 
\begin{lemma}
\label{lem:exact_contraction}Suppose that the potential $U$ satisfies Assumptions \ref{ass:potential}--\ref{ass:convexity}.  
For any compact set $A\subset S\times S\times\mathbb{R}^{d}$, there
exists a trajectory length $T>0$ such that for any $t\in(0,T]$, there exists 
$\rho\in[0,1)$ satisfying (\ref{eq:exact_contract_small_t}) for all $(q_{0}^{1},q_{0}^{2},p_{0})\in A$.
\end{lemma}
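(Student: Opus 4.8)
The plan is to analyze $\phi(t)=|\Delta(t)|^{2}$ directly rather than the ratio $|\Delta(t)|^{2}/|\Delta(0)|^{2}$, so that every bound is expressed as a multiple of $\phi(0)=|q_{0}^{1}-q_{0}^{2}|^{2}$ and stays valid up to the diagonal $\{q_{0}^{1}=q_{0}^{2}\}\subset A$, where (\ref{eq:exact_contract_small_t}) holds trivially because the two flows coincide. Writing $\Delta_{p}(t)=p^{1}(t)-p^{2}(t)$ and using the mean-value identity $\nabla U\{q^{1}(t)\}-\nabla U\{q^{2}(t)\}=H(t)\Delta(t)$ with $H(t)=\int_{0}^{1}\nabla^{2}U\{q^{2}(t)+s\Delta(t)\}\,ds$, the phase-space difference obeys $\dot{\Delta}=\Delta_{p}$ and $\dot{\Delta}_{p}=-H(t)\Delta$. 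Since $p^{1}(0)=p^{2}(0)$ forces $\Delta_{p}(0)=0$, one has $\phi'(0)=2\Delta(0)^{\top}\Delta_{p}(0)=0$ and $\phi''(0)=2|\Delta_{p}(0)|^{2}-2\Delta(0)^{\top}H(0)\Delta(0)=-2\Delta(0)^{\top}\{\nabla U(q_{0}^{1})-\nabla U(q_{0}^{2})\}$. The crucial point is that $\alpha$-strong convexity in Assumption \ref{ass:convexity} is an inequality on the endpoints $q_{0}^{1},q_{0}^{2}\in S$, so it applies even though the interpolating segment $q_{0}^{2}+s\Delta(0)$ need not lie in $S$; this gives $\phi''(0)\leq-2\alpha\,\phi(0)$ for every $(q_{0}^{1},q_{0}^{2},p_{0})\in A$.

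Next I would record a priori bounds that are uniform over $A$. Assumption \ref{ass:potential} yields $\|\nabla^{2}U(q)\|\leq\beta$ for all $q$ in operator norm, hence $\|H(t)\|\leq\beta$ throughout. Fixing an initial interval $[0,T_{0}]$, a Gronwall estimate applied to $|\Delta(t)|^{2}+|\Delta_{p}(t)|^{2}$ gives $|\Delta(t)|^{2}+|\Delta_{p}(t)|^{2}\leq e^{(1+\beta)t}\phi(0)$, and since $\Delta_{p}(0)=0$ the bound $|\Delta_{p}(t)|\leq\int_{0}^{t}\beta|\Delta(s)|\,ds$ refines, on $[0,T_{0}]$, to $|\Delta_{p}(t)|\leq C_{0}\,t\,|\Delta(0)|$, whence $|\Delta(t)-\Delta(0)|\leq\int_{0}^{t}|\Delta_{p}(s)|\,ds\leq C_{0}\,t^{2}\,|\Delta(0)|$. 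Because $A\times[0,T_{0}]$ is compact and the flow is continuous, all trajectories with $(q_{0}^{1},q_{0}^{2},p_{0})\in A$ and $t\in[0,T_{0}]$ remain in a fixed compact set $\mathcal{C}\subset\mathbb{R}^{d}$; on $\mathcal{C}$ the continuous map $\nabla^{2}U$ is uniformly continuous, and $t\mapsto q^{i}(t)$ has a modulus of continuity uniform over $A$. This supplies a single $\omega$ with $\omega(u)\to0$ as $u\downarrow0$ such that $\|H(u)-H(0)\|\leq\omega(u)$ for all starting configurations in $A$.

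With these in hand I would control the increment $\phi''(u)-\phi''(0)$, where $\phi''(u)=2|\Delta_{p}(u)|^{2}-2\Delta(u)^{\top}H(u)\Delta(u)$. The momentum term is $O(u^{2})\phi(0)$ by the refined bound on $\Delta_{p}$. For the quadratic form I would split $\Delta(u)^{\top}H(u)\Delta(u)-\Delta(0)^{\top}H(0)\Delta(0)$ into $\{\Delta(u)-\Delta(0)\}^{\top}H(u)\{\Delta(u)+\Delta(0)\}$, which is $O(u^{2})\phi(0)$ via $\|H\|\leq\beta$ and the displacement bound, plus $\Delta(0)^{\top}\{H(u)-H(0)\}\Delta(0)$, which is at most $\omega(u)\phi(0)$. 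Combining, $\phi''(u)\leq-2\alpha\,\phi(0)+\varepsilon(u)\,\phi(0)$ with $\varepsilon(u)\to0$ as $u\downarrow0$ uniformly over $A$, so one may fix $T\in(0,T_{0}]$ with $T<\sqrt{2/\alpha}$ and $\varepsilon(u)\leq\alpha$ on $[0,T]$, giving $\phi''(u)\leq-\alpha\,\phi(0)$ there. Taylor's theorem with $\phi'(0)=0$ then yields $\phi(t)\leq\{1-\alpha t^{2}/2\}\phi(0)$ for all $t\in(0,T]$, so (\ref{eq:exact_contract_small_t}) holds with the single constant $\rho=\rho(t)=(1-\alpha t^{2}/2)^{1/2}\in[0,1)$ for every $(q_{0}^{1},q_{0}^{2},p_{0})\in A$.

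The main obstacle is precisely this uniformity in the Hessian term. Away from $t=0$ the trajectories may leave $S$, so strong convexity is unavailable for $u>0$ and cannot be used to sign $\Delta(u)^{\top}H(u)\Delta(u)$; the argument must instead survive on the global bound $\|H\|\leq\beta$ together with the fact that $H(u)$ stays close to the strictly dissipative $H(0)$ for small $u$. Securing $\|H(u)-H(0)\|\leq\omega(u)$ with one modulus $\omega$ for all configurations in $A$ is what forces the compactness hypothesis: it confines the trajectories to a compact $\mathcal{C}$ on which $\nabla^{2}U$ is uniformly continuous and on which equicontinuity of the flow yields a uniform modulus, which is exactly what is needed to extract a single $T$, and hence a single $\rho(t)$, valid across $A$.
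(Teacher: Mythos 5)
Your proof is correct, but it takes a genuinely different route from the paper's, and in one respect a more careful one. The paper expands $\Delta(t)$ to third order in $t$ with a Lagrange remainder $G_{*}=\nabla^{2}U\{q^{1}(t_{*})\}p^{1}(t_{*})-\nabla^{2}U\{q^{2}(t_{*})\}p^{2}(t_{*})$, applies strong convexity and Young's inequality to the resulting expansion of $|\Delta(t)|^{2}$, and then bounds $|G_{*}|^{2}$ by a constant obtained from compactness of $A$ and continuity of the flow. This is shorter and gives an explicit polynomial-in-$t$ contraction factor, but the price is that the remainder enters as an \emph{additive} term of order $t^{3}$ that is uniform over $A$ yet not proportional to $|\Delta(0)|^{2}$; taken literally, that bound degrades as $|q_{0}^{1}-q_{0}^{2}|\rightarrow 0$ and so does not by itself deliver a single $\rho<1$ valid arbitrarily close to the diagonal of $A$ --- a regime that matters, since the iterated application of the contraction in the proof of Proposition \ref{prop:relaxed_meeting} drives the coupled chains toward that diagonal. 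Your argument is organized precisely to avoid this: by writing $\nabla U\{q^{1}(t)\}-\nabla U\{q^{2}(t)\}=H(t)\Delta(t)$ with the integrated Hessian $H(t)$, working with $\phi(t)=|\Delta(t)|^{2}$, and comparing $H(u)$ to $H(0)$ through a modulus of continuity that is uniform over $A$ (via confinement of trajectories, and of the segments on which $H$ evaluates $\nabla^{2}U$, to a compact set), every error term ends up multiplied by $\phi(0)$, so the conclusion $\phi(t)\leq(1-\alpha t^{2}/2)\phi(0)$ holds uniformly on $A$ including at and near the diagonal, where both sides vanish. You also correctly observe that Assumption \ref{ass:convexity} is an endpoint inequality, so $\phi''(0)\leq-2\alpha\phi(0)$ needs no convexity of $S$ itself; the paper uses strong convexity in the same endpoint fashion. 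In short: the paper's proof buys brevity and explicit constants, while yours buys the uniformity over $A$ that the lemma actually asserts, at the cost of a somewhat longer Gronwall-plus-equicontinuity argument; the only cosmetic point to tighten is to say explicitly that the compact set $\mathcal{C}$ is taken to contain the convex hull of the union of trajectory images, so that $H(u)$ only ever evaluates $\nabla^{2}U$ inside $\mathcal{C}$.
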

Although the qualitative result in Lemma \ref{lem:exact_contraction} is
sufficient for our purposes, we note that more quantitative results of this
type have been established recently by \citet[Theorem 6]{Mangoubi:2017} and
\citet[Theorem 2.1]{bourabee:2018} to study the mixing time of Hamiltonian
Monte Carlo.
The preceding results show that the trajectory length $T$ yielding contraction
of the coupled system and the corresponding contraction rate $\rho$ 
do not depend on $d$ but only on the constants
$\alpha$ and $\beta$ of Assumptions \ref{ass:potential}--\ref{ass:convexity}.  This suggests that such a coupling strategy can be
effective in high dimension as long as the Hessian of $U$ is sufficiently
well-conditioned. 

\section{Coupled Hamiltonian Monte Carlo\label{sec:Hamiltonian-Monte-Carlo}}

\subsection{Leap-frog integrator \label{subsec:Leap-frog-integrator}}
As the flow defined by (\ref{eq:hamilton_ode}) is typically intractable,
time discretizations are required. The leap-frog symplectic
integrator is a standard choice as it preserves Properties \ref{property:reversibility} and \ref{property:volume}.
Given a step size $\varepsilon>0$ and a number of leap-frog steps
$L\in\mathbb{N}$, this scheme initializes at $(q_{0},p_{0})\in\mathbb{R}^{d}\times\mathbb{R}^{d}$
and iterates 
\begin{align*}
p_{\ell+1/2} & =p_{\ell}-\frac{\varepsilon}{2}\nabla U(q_{\ell}),\quad
q_{\ell+1}     =q_{\ell}+\varepsilon p_{\ell+1/2}, \quad
p_{\ell+1}     =p_{\ell+1/2}-\frac{\varepsilon}{2}\nabla U(q_{\ell+1}),
\end{align*}
for $\ell=0,\ldots,L-1$. We write the leap-frog iteration as $\hat{\Phi}_{\varepsilon}(q_{\ell},p_{\ell})=(q_{\ell+1},p_{\ell+1})$
and the corresponding approximation of the flow as $\hat{\Phi}_{\varepsilon,\ell}(q_{0},p_{0})=(q_{\ell},p_{\ell})$
for $\ell=0,\ldots,L$. As before, we denote by $\hat{\Phi}_{\varepsilon,\ell}^{\circ}(q_{0},p_{0})=q_{\ell}$
and $\hat{\Phi}_{\varepsilon,\ell}^{*}(q_{0},p_{0})=p_{\ell}$ the
projections onto the position and momentum coordinates respectively.

It can be established that the leap-frog scheme is of order two \citep[Theorem 3.4]{hairer:2005}, 
i.e. for sufficiently small $\varepsilon$, we have  
\begin{align}
|\hat{\Phi}_{\varepsilon,L}(q_{0},p_{0})-\Phi_{\varepsilon L}(q_{0},p_{0})|& \leq C_{3}(q_0,p_0,L)\varepsilon^{2},\label{eq:leapfrog_traj_error} \\
|\mathcal{E}\{\hat{\Phi}_{\varepsilon,L}(q_{0},p_{0})\}-\mathcal{E}(q_{0},p_{0})| & \leq C_{4}(q_0,p_0,L)\varepsilon^{2},\label{eq:leapfrog_hamiltonian_error}
\end{align}
for some positive constants $C_{3}$ and $C_{4}$ that depend continuously on the initial condition $(q_0,p_0)$ for any number of leap-frog iterations $L$. 
To simplify our exposition and focus on the proposed methods,
we will assume throughout the article that
(\ref{eq:leapfrog_traj_error})--(\ref{eq:leapfrog_hamiltonian_error}) hold.
We refer to the book by \citet{hairer:2005} on geometric
numerical integration and to the survey by \citet{bou2018geometric} for
additional assumptions under which these error bounds hold. 

We now discuss how the above constants behave with dimension and integration length. 
Firstly, under the simplified setting of a target distribution with independent and identical marginals and appropriate growth conditions on 
the potential, the results of \citet[Proposition 5.3 \& 5.4]{beskos2013optimal} 
indicate that these constants would scale as ${d}^{1/2}$. 
Hence if we scale the step size $\varepsilon$ as $d^{-1/4}$, advocated by \citet{beskos2013optimal} in this setting,
we can expect these errors to be stable in high dimensions. 
Secondly, while the constant associated to the pathwise error bound (\ref{eq:leapfrog_traj_error}) will typically grow exponentially with $L$ \citep[Section 2.2.3]{Leimkuhler:2005}, 
the constant of the Hamiltonian error bound (\ref{eq:leapfrog_hamiltonian_error}) on the other
hand can be stable over exponentially long time intervals $\varepsilon L$
\citep[Theorem 8.1]{hairer:2005}. 
Although the Hamiltonian is not 
conserved exactly under time discretization, one can employ a Metropolis--Hastings
correction as described in the following section.

\subsection{Coupled Hamiltonian Monte Carlo kernel \label{subsec:Coupled-HMC-kernel}}

Hamiltonian Monte Carlo \citep{Duane:1987,Neal:1993} is a
Metropolis--Hastings algorithm that targets $\pi$ 
using time discretized Hamiltonian dynamics as proposals.
In view of Section \ref{subsec:Coupled-Hamiltonian-dynamics}, we
consider coupling two Hamiltonian Monte Carlo chains $(Q_{n}^{1},Q_{n}^{2})_{n\geq0}$ by initializing $(Q_0^1,Q_0^2) \sim \bar{\pi}_0$ and evolving the chains jointly according to the following procedure. 
\begin{algorithm}
    \caption{Coupled Hamiltonian Monte Carlo step given $(Q_{n-1}^1,Q_{n-1}^2)$.}
    \label{alg:coupled_hmc}
\begin{tabbing}
   Sample momentum $P_{n}^{*}\sim\mathcal{N}(0_{d},I_{d})$ and $U_n\sim\mathcal{U}[0,1]$ independently \\
   For $i=1,2$ \\
   \qquad Set $(q_{0}^i,p_{0}^i)=(Q_{n-1}^i,P_{n}^{*})$\\
   \qquad Perform leap-frog integration to obtain $(q_{L}^i,p_{L}^i)=\hat{\Phi}_{\varepsilon,L}(q_{0}^i,p_{0}^i)$ \\
   \qquad If $U_n < \alpha\{(q_{0}^i,p_{0}^i),(q_{L}^i,p_{L}^i)\}$, set $Q_{n}^i=q_{L}^i$ \\   
   \qquad Otherwise set $Q_{n}^i=Q_{n-1}^i$ \\
   Output $(Q_{n}^1,Q_{n}^2)$
\end{tabbing}
\end{algorithm}
Since the leap-frog integrator preserves Properties \ref{property:reversibility} and \ref{property:volume}, the Metropolis--Hastings acceptance probability is 
\begin{align}
\alpha\left\lbrace(q,p),(q',p')\right\rbrace=\min\left[1,\exp\left\lbrace \mathcal{E}(q,p)-\mathcal{E}(q',p')\right\rbrace\right],\label{eq:MH_acceptance}
\end{align}
for $(q,p),(q',p')\in\mathbb{R}^{d}\times\mathbb{R}^{d}$. 
Iterating the above yields two marginal chains $(Q_{n}^1)_{n\geq0}$ and $(Q_{n}^2)_{n\geq0}$ that are $\pi$-invariant.
Algorithm \ref{alg:coupled_hmc} amounts to running two Hamiltonian Monte Carlo chains with common
random numbers; this has been considered in \citet{neal2002circularly} to remove the burn-in bias, and in 
\citet{Mangoubi:2017} and \citet{bourabee:2018} to analyze mixing properties. 

We denote the associated coupled Markov transition kernel
on the position coordinates as $\bar{K}_{\varepsilon,L}\{(q^{1},q^{2}),A^{1}\times A^{2}\}$
for $q^{1},q^{2}\in\mathbb{R}^{d}$ and $A^{1},A^{2}\in\mathcal{B}(\mathbb{R}^{d})$.
Marginally we have $\bar{K}_{\varepsilon,L}\{(q^{1},q^{2}),A^{1}\times\mathbb{R}^{d}\}=K_{\varepsilon,L}(q^{1},A^{1})$
and $\bar{K}_{\varepsilon,L}\{(q^{1},q^{2}),\mathbb{R}^{d}\times A^{2}\}=K_{\varepsilon,L}(q^{2},A^{2})$,
where $K_{\varepsilon,L}$ denotes the Markov transition kernel of the marginal Hamiltonian Monte Carlo chain.
If we supplement Assumption \ref{ass:potential} with 
the existence of a local minimum of $U$, then
aperiodicity, Lebesgue-irreducibility and Harris recurrence of $K_{\varepsilon,L}$ follow from \citet[Theorem 2]{durmus2017convergence}; see also \citet{Cances:2007} and \citet{Livingstone:2016} for previous works. Hence ergodicity follows from \citet[Theorem 13.0.1]{meyn:tweedie:2009} and Assumption \ref{ass:convergence} is satisfied for test functions satisfying $\pi(h^{2+\kappa_1})<\infty$ for some $\kappa_1>0$. 

We will write the law of the coupled Hamiltonian Monte Carlo chain as pr$_{\varepsilon,L}$, and ${E}_{\varepsilon,L}$ to denote expectation with
respect to pr$_{\varepsilon,L}$. 
The following result establishes that the relaxed meeting time
$\tau_{\delta}=\inf\{ n\geq 0:|Q_{n}^{1}-Q_{n}^{2}|\leq\delta\}$, for any
$\delta>0$, has geometric tails.

\begin{theorem}\label{thm:relaxed_meeting}
Suppose that the potential $U$ satisfies Assumptions \ref{ass:potential}--\ref{ass:convexity}.  
Assume also that there exists $\tilde{\varepsilon}>0$ such that for any $\varepsilon\in(0,\tilde{\varepsilon})$ and $L\in\mathbb{N}$, there exist a measurable function $V:\mathbb{R}^d\rightarrow[1,\infty)$, $\lambda\in(0,1)$ and $b<\infty$ such that 
\begin{align}\label{eqn:drift}
K_{\varepsilon,L}(V)(q) \leq \lambda V(q) + b
\end{align}
for all $q\in\mathbb{R}^d$, $\pi_0(V)<\infty$ and 
$\{q\in \mathbb{R}^d: V(q)\leq \ell_1\}\subseteq\{q\in S : U(q)\leq \ell_0\}$
for some $\ell_0\in(\inf_{q\in S}U(q), \sup_{q\in S}U(q))$ and 
$\ell_1>1$ satisfying $\lambda + 2b(1-\lambda)^{-1}(1+\ell_1)^{-1} < 1$.
Then for any $\delta>0$, there exist $\varepsilon_0\in(0,\tilde{\varepsilon})$ and $L_0\in\mathbb{N}$ such that for any 
$\varepsilon\in(0,\varepsilon_0)$ and $L\in\mathbb{N}$ satisfying
$\varepsilon L<\varepsilon_0 L_0$, we have 
\begin{align}\label{eqn:relaxedmeeting_tails}	
	\mathrm{pr}_{\varepsilon,L}(\tau_{\delta}>n)\leq C_0\kappa_0^n
\end{align}
for some $C_0\in\mathbb{R}_+, \kappa_0\in(0,1)$ and all integer $n\geq 0$.
\end{theorem}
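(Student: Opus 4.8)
The plan is to combine a geometric drift condition for the coupled chain $(Q_n^1,Q_n^2)_{n\ge0}$ with a one-step contraction of the inter-chain distance on a compact set where Assumption \ref{ass:convexity} applies, and then to invoke a standard drift-and-minorization argument to deduce geometric tails of $\tau_\delta$. The drift supplied by (\ref{eqn:drift}) will control how often and how far the pair wanders outside the convex region $S$, while Lemma \ref{lem:exact_contraction} will provide contraction whenever both chains lie in $S$ sharing a moderately sized momentum.

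First I would lift (\ref{eqn:drift}) to the coupled kernel. Setting $W(q^1,q^2)=V(q^1)+V(q^2)$ and using that $\bar K_{\varepsilon,L}$ has $K_{\varepsilon,L}$ as its marginals, additivity gives $\bar K_{\varepsilon,L}(W)(q^1,q^2)\le\lambda W(q^1,q^2)+2b$ for all $\varepsilon\in(0,\tilde\varepsilon)$ and $L\in\mathbb N$. Since $V\ge1$ and $\ell_1>1$, the sublevel set $\{W\le 1+\ell_1\}$ is contained in $G\times G$, where $G=\{q\in\mathbb R^d:V(q)\le\ell_1\}\subseteq\{q\in S:U(q)\le\ell_0\}$ is compact; the numerical condition on $\lambda,b,\ell_1$ then ensures that this sublevel set is returned to with geometrically decaying excursion lengths, so the pair occupies $G\times G$ recurrently. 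Moreover $\bar\pi_0(W)=2\pi_0(V)<\infty$, which will let the final constant $C_0$ depend only on the initialization.

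Next I would establish contraction on $G\times G$. Fix a radius $p_{\max}$ so that $\mathrm{pr}(|P_n^*|\le p_{\max})$ is bounded below, and apply Lemma \ref{lem:exact_contraction} to the compact set $A=G\times G\times\{p:|p|\le p_{\max}\}\subset S\times S\times\mathbb R^d$, obtaining a trajectory length $T>0$ and $\rho\in[0,1)$. I would then fix $\varepsilon_0,L_0$ so every admissible pair satisfies $\varepsilon L<\varepsilon_0 L_0\le T$, placing the integration time inside the contraction window. On $A$ the constants $C_3,C_4$ in (\ref{eq:leapfrog_traj_error})--(\ref{eq:leapfrog_hamiltonian_error}) are uniformly bounded, so shrinking $\varepsilon_0$ does two things at once. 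Via the pathwise bound (\ref{eq:leapfrog_traj_error}) and the triangle inequality, the discrete position map obeys $|\hat\Phi_{\varepsilon,L}^\circ(q^1,p)-\hat\Phi_{\varepsilon,L}^\circ(q^2,p)|\le\rho|q^1-q^2|+2C_3\varepsilon^2\le\rho'|q^1-q^2|$ for some $\rho'\in(\rho,1)$, valid for $(q^1,q^2,p)\in A$ with $|q^1-q^2|>\delta$; this is the step where $\delta$ and the $O(\varepsilon^2)$ error compete, forcing $\varepsilon_0$ to depend on $\delta$. Via energy conservation (Property \ref{property:energy}) and the Hamiltonian bound (\ref{eq:leapfrog_hamiltonian_error}), each acceptance probability (\ref{eq:MH_acceptance}) satisfies $\alpha\{(q^i_0,p),(q^i_L,p^i_L)\}\ge1-c\varepsilon^2$ on $A$; since the two chains share the common momentum $P_n^*$ and uniform $U_n$, both proposals are accepted simultaneously with probability at least $p_1>0$, uniformly over $A$, once $p_{\max}$ is large and $\varepsilon_0$ small. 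Hence from any point of $G\times G$ with $|q^1-q^2|>\delta$ the next state satisfies $|Q_n^1-Q_n^2|\le\rho'|q^1-q^2|$ with probability at least $p_1$.

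Finally I would assemble these ingredients. Each visit to $G\times G$ contracts the distance by $\rho'<1$ with probability at least $p_1$, and the coupled drift guarantees that such visits recur after geometrically short excursions. The main obstacle is that no contraction is available outside $S$, so I must show the distance cannot undo this progress during excursions: here I would use that $\nabla U$ is $\beta$-Lipschitz (Assumption \ref{ass:potential}), which makes $\hat\Phi^\circ_{\varepsilon,L}(\cdot,p)$ globally Lipschitz with a constant bounded uniformly over $\varepsilon L<\varepsilon_0 L_0$, so on any step the distance grows by at most a fixed factor when both chains accept, is unchanged when both reject, and moves only on the rare \emph{discordant} steps where a single chain accepts. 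The discordant probability equals $|\alpha^1-\alpha^2|$, which is $O(\varepsilon^2)$ on $G\times G$ and is controlled elsewhere by the moments built into $W$. The trade-off—the distance contracts precisely where $W$ is bounded, whereas $W$ contracts precisely where the distance may grow—lets me combine $W$ and a function of $|q^1-q^2|$ into a single Lyapunov function satisfying a geometric drift towards $\{|q^1-q^2|\le\delta\}$, or equivalently apply the drift-and-minorization argument of \citet[Section 3]{jacob2017unbiased}. Either route yields $\mathrm{pr}_{\varepsilon,L}(\tau_\delta>n)\le C_0\kappa_0^n$ with $\kappa_0\in(0,1)$ and $C_0<\infty$ depending on $\bar\pi_0(W)<\infty$. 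I expect this final combination, and in particular the bookkeeping of discordant steps and of excursions outside $S$, to be the technically delicate part of the argument.
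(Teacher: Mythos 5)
Your first two stages are sound and match the paper's strategy: lifting the drift (\ref{eqn:drift}) to the coupled kernel via an additive bivariate Lyapunov function, and using Lemma \ref{lem:exact_contraction} on a compact set of positions-plus-truncated-momenta, together with the leap-frog bounds (\ref{eq:leapfrog_traj_error})--(\ref{eq:leapfrog_hamiltonian_error}) and the shared uniform $U_n$, to get a one-step contraction $|Q_1^1-Q_1^2|\le\rho'|Q_0^1-Q_0^2|$ with probability bounded below on $G\times G$. The gap is in your assembly step. A single contraction by $\rho'<1$ is not a minorization of the event $\{|Q^1-Q^2|\le\delta\}$: to reach distance $\delta$ you must accumulate many contraction steps, and in your scheme these are spread across successive visits to $G\times G$, separated by excursions during which nothing prevents the distance from growing. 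The leap-frog map has Lipschitz constant strictly greater than one, so the distance can expand geometrically in excursion length, and on discordant steps (one chain accepts, the other rejects) the distance can jump by the full, unbounded trajectory displacement. The drift condition controls occupation times and moments of $V$, not pathwise growth of $|X_n-Y_n|$; nothing in the hypotheses guarantees that contraction gained on visits dominates expansion during excursions (one would need something like the expansion factor times the excursion decay rate to be below one, which is not assumed). Your proposed repair --- a combined Lyapunov function in $W$ and $|q^1-q^2|$ --- is asserted rather than constructed, and the drift-and-minorization argument of \citet[Proposition 3.4]{jacob2017unbiased} that you want to invoke requires exactly what you lack: a lower bound on the probability of entering $D_\delta$ from the small set that is \emph{uniform over the set}, hence independent of the current distance.

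The paper closes this gap differently, and this is the idea missing from your proposal: it iterates the contraction to completion \emph{within a single block of $n_0$ consecutive steps}, with $n_0$ chosen so that $\rho_1^{n_0}\sup_{q,q'\in S}|q-q'|\le\delta$, and keeps both chains inside the contraction region for the whole block by an energy-budget argument. Concretely (Proposition \ref{prop:relaxed_meeting}), each refreshed momentum is restricted to a kinetic-energy level set $L_{k_0}(K)$, and energy conservation (Property \ref{property:energy}) plus the Hamiltonian error bound (\ref{eq:leapfrog_hamiltonian_error}) guarantee the positions remain in $L_{u_1}(U_S)\subseteq S$ provided the levels satisfy $v_0+(n_0+1)k_0+n_0\eta_0<u_1$. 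This yields $\inf_{q^1,q^2\in S_0}\bar{K}_{\varepsilon,L}^{n_0}\{(q^1,q^2),D_\delta\}\ge\omega>0$: a genuine multi-step minorization from the small set $S_0\times S_0$, uniform in the starting distance. The proof of Theorem \ref{thm:relaxed_meeting} then subsamples the coupled chain every $n_0$ steps and runs the excursion/supermartingale argument on the subsampled chain, decomposing on the number of visits to the small set; each visit gives an independent chance $\omega$ of $\delta$-meeting, so no bookkeeping of the distance during excursions, and no control of discordant steps, is ever needed. To repair your proof, replace your one-step contraction by this blocked, confined iteration before invoking the drift-and-minorization machinery.
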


The proof of Theorem \ref{thm:relaxed_meeting} proceeds by first showing that the relaxed meeting can take place, in finite iterations, whenever both chains enter a region of the state space where the target distribution is strongly log-concave. As suggested in \citet{neal2002circularly}, one can expect good coupling behaviour if the chains spend enough time in this region of the state space; the second part of the proof makes this intuition precise by controlling 
excursions with the geometric drift condition (\ref{eqn:drift}). The latter can be established under additional assumptions on the potential $U$ \cite[Theorem 9]{durmus2017convergence}.

As Theorem \ref{thm:relaxed_meeting} implies that the coupled chains can get arbitrarily close with sufficient frequency, one could potentially employ the unbiased estimation framework of \citet{glynn2014exact} that 
introduces a truncation variable. 
To verify Assumption \ref{ass:tail} that requires exact meetings, 
in the next section, we combine the coupled Hamiltonian Monte Carlo kernel
with another coupled kernel that is designed to trigger exact meetings
when the two chains are close.

\section{Unbiased Hamiltonian Monte Carlo \label{sec:Proposed-estimators}}
\subsection{Coupled random walk Metropolis--Hastings kernel \label{subsec:Making-chains-meet}}
Let $K_{\sigma}$ denote the $\pi$-invariant Gaussian random walk Metropolis--Hastings kernel with proposal covariance $\sigma^2I_d$. 
The following describes a coupling of $K_{\sigma}(x,\cdot)$ and $K_{\sigma}(y,\cdot)$ that results in exact meetings with high probability when $x,y\in\mathbb{R}^d$ are close \citep{Johnson:1998,jacob2017unbiased} and $\sigma$ is appropriately chosen.

We begin by sampling the proposals $X^*\sim\mathcal{N}(x,\sigma^2I_d)$ and
$Y^*\sim\mathcal{N}(y,\sigma^2I_d)$ from the maximal coupling of these two
Gaussian distributions \citep[Section 4.1]{jacob2017unbiased}. Under the
maximal coupling, the probability of $\{X^*\neq Y^*\}$ is equal to the total variation
distance between the distributions
$\mathcal{N}(x,\sigma^{2}I_{d})$ and $\mathcal{N}(y,\sigma^{2}I_{d})$.
Analytical tractability in the Gaussian case allows us to write
that distance as
$\mathrm{pr}(2\sigma
|Z|\leq\delta)$, where $Z\sim\mathcal{N}(0,1)$ and $\delta=|x-y|$. By
approximating the
folded Gaussian cumulative distribution function \citep{pollard:2005}, we obtain 
\begin{align}\label{eqn:TV_approx}
\mathrm{pr}(X^{*}=Y^{*})=\mathrm{pr}(2\sigma |Z|>\delta)
=1-(2\pi)^{-1/2}\frac{\delta}{\sigma}
+\mathcal{O}\left(\frac{\delta^{2}}{\sigma^{2}}\right)
\end{align}
as $\delta/\sigma\rightarrow 0$. Hence to achieve pr$(X^{*}=Y^{*})=\theta$ for some desired probability
$\theta$, $\sigma$ should be chosen approximately as $\delta/\{(2\pi)^{1/2}\left(1-\theta\right)\}$. 

The proposed values $X^{*}$ and $Y^{*}$ are then accepted according to Metropolis--Hastings acceptance probabilities, i.e. if $U^*\leq\min\{1,\pi(X^{*})/\pi(x)\}$
and $U^*\leq\min\{1,\pi(Y^{*})/\pi(y)\}$ respectively, where
a common uniform random variable $U^*\sim\mathcal{U}[0,1]$ is used for both
chains. We denote the resulting coupled Markov transition kernel on $\{\mathbb{R}^d\times\mathbb{R}^d, \mathcal{B}(\mathbb{R}^d)\times\mathcal{B}(\mathbb{R}^d)\}$ as $\bar{K}_{\sigma}$. 
If $\sigma$ is small relative to the spread of the target distribution, 
the probability of accepting both proposals would be high.
On the other hand, (\ref{eqn:TV_approx}) shows that $\sigma$ needs to be 
large compared to $\delta$ for the event $\{X^{*}=Y^{*}\}$ to occur with high probability. 
This leads to a trade-off; in practice, 
one can monitor acceptance probabilities of random walk Metropolis--Hastings chains from 
preliminary runs to guide how small $\sigma$ should be.  
Although most simulations in Section \ref{sec:Numerical-illustrations} will employ $\sigma=10^{-3}$ as the default value, 
the sensitivity of the choice of $\sigma$ on our proposed methodology will be investigated in Sections \ref{subsec:Logistic-regression} and \ref{subsec:Cox-Process}. 

\subsection{Combining coupled kernels \label{subsec:Proposed-algorithm}}

We now combine the coupled Hamiltonian Monte Carlo kernel $\bar{K}_{\varepsilon,L}$ with 
the coupled random walk Metropolis--Hastings kernel $\bar{K}_{\sigma}$, 
introduced in Sections \ref{subsec:Coupled-HMC-kernel} and \ref{subsec:Making-chains-meet} respectively, using the following mixture 
\begin{align}\label{eqn:coupled_mixture}
\bar{K}_{\varepsilon,L,\sigma}\{(x,y),A\times B\} = (1-\gamma)\bar{K}_{\varepsilon,L}\{(x,y),A\times B\} + 
					\gamma\bar{K}_{\sigma}\{(x,y),A\times B\}
\end{align}
for $x,y\in\mathbb{R}^d$ and $A,B\in\mathcal{B}(\mathbb{R}^d)$, where $\gamma\in(0,1), \varepsilon>0, L\in\mathbb{N},\sigma>0$ are appropriately chosen. 
The rationale for this choice is to enable exact meetings using the coupled random walk Metropolis--Hastings kernel when the chains are brought close together by the coupled Hamiltonian Monte Carlo kernel. 

To address the choice of $\gamma$, in light of the efficiency considerations
in Section \ref{subsec:Context:-unbiased-estimation}, we should understand
how $\gamma$ impacts both the average meeting time, which we will investigate in
Sections \ref{subsec:Logistic-regression} and \ref{subsec:Cox-Process}, and the
asymptotic inefficiency of the marginal kernel
${K}_{\varepsilon,L,\sigma}=(1-\gamma)K_{\varepsilon,L}+\gamma K_{\sigma}$. 
We now compare the asymptotic inefficiency of ${K}_{\varepsilon,L,\sigma}$ to that of ${K}_{\varepsilon,L}$. 
Assuming that evaluation of the potential and its gradient have the same cost, the latter is given by the product of its cost $L+2$ and 
its asymptotic variance $v(h,K_{\varepsilon,L})=\lim_{n\rightarrow\infty}\mathrm{var}_{\varepsilon,L}\{n^{-1/2}\sum_{i=1}^nh(X_i)\}$ 
where $X_0\sim\pi$ and $X_n\sim K_{\varepsilon,L}(X_{n-1},\cdot)$ for all integer $n\geq 1$. Noting that the expected cost of 
${K}_{\varepsilon,L,\sigma}$ is $(1-\gamma)(L+2)+\gamma$, we now consider its asymptotic variance $v(h,K_{\varepsilon,L,\sigma})$. 
By Peskun's ordering \citep{peskun1973optimum}, we have $v(h,K_{\varepsilon,L,\sigma})\leq v(h,P_{\varepsilon,L})$ where 
$P_{\varepsilon,L}=(1-\gamma)K_{\varepsilon,L}+\gamma I$ 
with the identity kernel defined as $I(x,A)=\mathbb{I}_A(x)$ for $x\in\mathbb{R}^d$ and $A\in\mathcal{B}(\mathbb{R}^d)$. 
We then apply \citet[Corollary 1]{latuszynski2013clts} to obtain $v(h,K_{\varepsilon,L,\sigma})\leq \gamma(1-\gamma)^{-1}\mathrm{var}_{\pi}\{h(X)\}+(1-\gamma)^{-1}v(h,K_{\varepsilon,L})$. Hence in summary the relative asymptotic inefficiency can be upper bounded by 
\begin{align}\label{eqn:relative_compare_mixture}
\left\lbrace 1 + \gamma(1-\gamma)^{-1}(L+2)^{-1}\right\rbrace 
\left[ 1 + \gamma\{1+\Psi(h,K_{\varepsilon,L})\}^{-1}\right],
\end{align}
where $\Psi(h,K_{\varepsilon,L})=1+2\sum_{n=1}^{\infty}\mathrm{Corr}_{\varepsilon,L}\{h(X_0),h(X_n)\}$ denotes the 
integrated auto-correlation time of a stationary Hamiltonian Monte Carlo chain. In view of (\ref{eqn:relative_compare_mixture}), 
we advocate choosing only small values of $\gamma$ to reduce the loss of efficiency of the marginal chain; most simulations in Section \ref{sec:Numerical-illustrations} will employ $\gamma=1/20$ as the default value.

We will write $Q_{\sigma}(x,A)=\int_{A}\mathcal{N}(y;x,\sigma^2I_d)dy, x\in\mathbb{R}^d, A\in\mathcal{B}(\mathbb{R}^d)$ as 
the Markov transition kernel of the Gaussian random walk, the law of the resulting coupled chain $(X_n,Y_n)_{n\geq 0}$ as pr$_{\varepsilon,L,\sigma}$, and ${E}_{\varepsilon,L,\sigma}$ to denote expectation with
respect to pr$_{\varepsilon,L,\sigma}$. 
The following details the simulation of $(X_n,Y_n)_{n\geq 0}$ to compute the unbiased estimators described in Section \ref{subsec:Context:-unbiased-estimation}. 
\begin{algorithm}
    \caption{Compute unbiased estimator $H_{k:m}(X,Y)$ of $\pi(h)$ }
    \label{alg:coupled_mixture}
\begin{tabbing}
   Initialize $(X_0,Y_0)\sim\bar{\pi}_0$ from a coupling with $\pi_0$ as marginals \\
   With probability $\gamma$, sample $X_1\sim K_{\sigma}(X_0,\cdot)$; otherwise sample $X_1\sim K_{\varepsilon,L}(X_0,\cdot)$ \\
   Set $n=1$. While $n<\max(m,\tau)$\\
   \qquad With probability $\gamma$, sample $(X_{n+1},Y_n)\sim\bar{K}_{\sigma}\{(X_{n},Y_{n-1}),\cdot\}$\\
   \qquad Otherwise sample $(X_{n+1},Y_n)\sim\bar{K}_{\varepsilon,L}\{(X_{n},Y_{n-1}),\cdot\}$\\
   \qquad If $X_{n+1}=Y_n$ set $\tau = n+1$\\
   \qquad Increment $n \leftarrow n+1$\\
Compute $H_{k:m}(X,Y)$ using (\ref{eq:Hkm})
\end{tabbing}
\end{algorithm}

The mixture kernel $K_{\varepsilon,L,\sigma}$ inherits ergodicity properties from any of its components, therefore
Assumption \ref{ass:convergence} can be satisfied following the discussion in Section \ref{subsec:Coupled-HMC-kernel}.
Noting that the faithfulness property in Assumption
\ref{ass:faithfulness} holds by construction, we now turn our attention to Assumption \ref{ass:tail}. 
\begin{theorem}\label{thm:exact_meeting}
Suppose that the potential $U$ satisfies Assumptions \ref{ass:potential}--\ref{ass:convexity}.  
Assume also that there exist $\tilde{\varepsilon}>0$ and $\tilde{\sigma}>0$ such that for any $\varepsilon\in(0,\tilde{\varepsilon}),L\in\mathbb{N}$ and $\sigma\in(0,\tilde{\sigma})$, there exist a measurable function $V:\mathbb{R}^d\rightarrow[1,\infty)$, $\lambda\in(0,1), b<\infty$ and $\mu>0$ such that 
\begin{align}\label{eqn:new_drift}
K_{\varepsilon,L}(V)(x) \leq \lambda V(x) + b\quad\mbox{and}\quad Q_{\sigma}(V)(x)\leq \mu\{V(x)+1\}
\end{align}
for all $x\in\mathbb{R}^d$, $\pi_0(V)<\infty$, $\lambda_0=(1-\gamma)\lambda+\gamma(1+\mu)<1$  and 
$\{x\in \mathbb{R}^d: V(x)\leq \ell_1\}\subseteq\{x\in S : U(x)\leq \ell_0\}$
for some $\ell_0\in(\inf_{x\in S}U(x), \sup_{x\in S}U(x))$ and 
$\ell_1>1$ satisfying 
$\lambda_0 + 2\{(1-\gamma)b+\gamma\mu\}(1-\lambda_0)^{-1}(1+\ell_1)^{-1} < 1$.
Then there exist $\varepsilon_0\in(0,\tilde{\varepsilon}), L_0\in\mathbb{N}$ and $\sigma_0>0$ such that for any $\varepsilon\in(0,\varepsilon_0), L\in\mathbb{N}$ satisfying
$\varepsilon L<\varepsilon_0 L_0$ and $\sigma\in(0,\sigma_0)$, we have 
\begin{align}\label{eqn:exact_meetingtime}	
	\mathrm{pr}_{\varepsilon,L,\sigma}(\tau>n)\leq C_0\kappa_0^n
\end{align}
for some $C_0\in\mathbb{R}_+, \kappa_0\in(0,1)$ and all integer $n\geq 0$.
\end{theorem}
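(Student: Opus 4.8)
The plan is to mirror the two-part structure behind Theorem \ref{thm:relaxed_meeting}, now applied to the marginal and coupled mixture kernels, and then to append a final ingredient that upgrades a relaxed meeting into an exact one. First I would verify that $K_{\varepsilon,L,\sigma}=(1-\gamma)K_{\varepsilon,L}+\gamma K_{\sigma}$ satisfies a geometric drift condition. Since the Metropolis--Hastings kernel accepts with probability at most one, $K_{\sigma}(V)(x)\leq V(x)+Q_{\sigma}(V)(x)\leq(1+\mu)V(x)+\mu$, and combining this with (\ref{eqn:new_drift}) gives $K_{\varepsilon,L,\sigma}(V)(x)\leq\lambda_0 V(x)+b_0$ with $\lambda_0=(1-\gamma)\lambda+\gamma(1+\mu)$ and $b_0=(1-\gamma)b+\gamma\mu$, exactly the regime assumed. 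Passing to the coupled kernel, the Lyapunov function $\bar{V}(x,y)=V(x)+V(y)$ inherits $\bar{K}_{\varepsilon,L,\sigma}(\bar{V})\leq\lambda_0\bar{V}+2b_0$, which accounts for the factor $2\{(1-\gamma)b+\gamma\mu\}$ appearing in the hypothesis. Together with the inclusion $\{V\leq\ell_1\}\subseteq\{x\in S:U(x)\leq\ell_0\}$ and the constant inequality $\lambda_0+2b_0(1-\lambda_0)^{-1}(1+\ell_1)^{-1}<1$, this joint drift lets me reproduce the excursion-control argument of Theorem \ref{thm:relaxed_meeting} for the coupled mixture chain: both chains return simultaneously to the compact set $C=\{V\leq\ell_1\}\subseteq S$ with return times having geometric tails.

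Second, I would establish a uniform lower bound on the probability of an exact meeting within a bounded number of steps, conditional on $(x,y)\in C\times C$, which is where the two coupled kernels cooperate. With probability at least $(1-\gamma)^{j}$ the next $j$ steps are all coupled Hamiltonian Monte Carlo steps; choosing the trajectory length so that $\varepsilon L<\varepsilon_0 L_0$ and the leap-frog errors (\ref{eq:leapfrog_traj_error})--(\ref{eq:leapfrog_hamiltonian_error}) are small enough that the acceptance probability is bounded below on $C$, the discretized contraction inherited from Lemma \ref{lem:exact_contraction} drives $|X_n-Y_n|$ below a threshold $\delta$ while both chains remain in $S$; by compactness of $C$ and continuity of the flow, $j$ and the associated probability can be taken uniform over $C\times C$. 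Once $|X_n-Y_n|\leq\delta$, with probability $\gamma$ the next step uses $\bar{K}_{\sigma}$; taking $\delta=c\sigma$ for small $c$, the maximal coupling makes $\{X^{*}=Y^{*}\}$ occur with probability $\mathrm{pr}(2\sigma|Z|>\delta)=\mathrm{pr}(|Z|>c/2)$, bounded below uniformly in $\sigma$ by (\ref{eqn:TV_approx}), and conditional on a common proposal $z$ near $C$, continuity of $U$ on a compact enlargement of $C$ forces the common acceptance probability $\min\{1,\pi(z)/\pi(x)\}\wedge\min\{1,\pi(z)/\pi(y)\}$ to be bounded below. Multiplying these factors yields a constant $\theta_0>0$ and an integer $j_0$ so that the exact meeting probability from any point of $C\times C$ within $j_0+1$ steps is at least $\theta_0$.

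Third, I would combine the two parts by a geometric-trials argument using the strong Markov property: the coupled chain returns to $C\times C$ with geometrically decaying return-time tails, and each visit initiates a block of $j_0+1$ steps that meets exactly with probability at least $\theta_0$ regardless of the past. Hence the number of visits before success is stochastically dominated by a geometric random variable, and a standard computation bounding a geometric sum of geometric-tailed return times delivers $\mathrm{pr}_{\varepsilon,L,\sigma}(\tau>n)\leq C_0\kappa_0^{n}$. The order in which the constants are fixed matters: I would first select $\varepsilon_0,L_0$ from the contraction and acceptance requirements on $C$, then $\sigma_0$ small enough for the random-walk acceptance control, and finally $\delta=c\sigma$ with $c$ small enough for the proposal-coincidence bound.

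The main obstacle I anticipate is the uniform control in the second part: ensuring that a sequence of coupled Hamiltonian Monte Carlo steps simultaneously keeps both chains inside $S$ so that the strong-convexity-driven contraction of Lemma \ref{lem:exact_contraction} applies at every step, has both proposals accepted, and drives the separation below $\delta$, all with probability bounded away from zero uniformly over $C\times C$ and compatibly with the constraint $\varepsilon L<\varepsilon_0 L_0$. Reconciling the two threshold regimes, namely $\delta$ small enough that the Hamiltonian contraction has had time to bring the chains together yet $\delta$ small relative to $\sigma$ for the random-walk coincidence, is the delicate bookkeeping at the heart of the argument.
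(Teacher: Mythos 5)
Your proposal follows essentially the same route as the paper's proof: the mixture drift bound $K_{\varepsilon,L,\sigma}(V)\leq\lambda_0V+b_0$ with $b_0=(1-\gamma)b+\gamma\mu$, the excursion-control (super-martingale/geometric-trials) machinery of Theorem \ref{thm:relaxed_meeting} applied to the coupled mixture kernel, and a block argument in which, with probability $(1-\gamma)^{n_0}\omega$, consecutive coupled Hamiltonian Monte Carlo steps bring the chains into $D_{\delta}\cap S\times S$ via Proposition \ref{prop:relaxed_meeting}, after which one coupled random-walk Metropolis--Hastings step produces an exact meeting with probability bounded below. The only cosmetic differences are that the paper bounds each chain's marginal acceptance probability (using the Lipschitz gradient, Dini's theorem and Fr\'{e}chet's inequality) instead of conditioning on the common proposal, and it selects $\delta$ as a function of $\sigma$ \emph{before} invoking the relaxed-meeting proposition, which is the cleaner order since the block length and admissible $\varepsilon$ there depend on $\delta$.
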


Proof of the above result proceeds in two parts as in Theorem
\ref{thm:relaxed_meeting}, but requires slightly stronger assumptions to ensure
that the mixture kernel still satisfies a geometric drift condition.  The
assumptions of Theorems \ref{thm:relaxed_meeting}--\ref{thm:exact_meeting} can
be verified for target distributions given by multivariate Gaussian
distributions and posterior distributions arising from Bayesian logistic
regression; see Section \ref{sec:check_assumptions} of the supplement.
Although the above discussion guarantees validity of the unbiased estimator
computed by Algorithm \ref{alg:coupled_mixture} for a range of tuning
parameters, its efficiency will depend on the distribution of the meeting time
$\tau$ induced by the coupling and mixing properties of the marginal kernel
$K_{\varepsilon,L,\sigma}$. 

\section{Numerical illustrations \label{sec:Numerical-illustrations}}

\subsection{Preliminaries}\label{sec:preliminaries}

In practice, we will run Algorithm \ref{alg:coupled_mixture} $R$ times independently in parallel to obtain the unbiased estimators 
$H_{k:m}(X^{(r)},Y^{(r)}), r=1,\ldots,R$.
Following the framework of \citet{glynn1992asymptotic}, we define the
asymptotic inefficiency in the limit of our computational budget as
$i(h,\bar{\pi}_0,\bar{K}_{\varepsilon,L,\sigma})=E_{\varepsilon,L,\sigma}\{2(\tau-1)+\max(1,m+1-\tau)\}\,\mathrm{var}_{\varepsilon,L,\sigma}\{H_{k:m}(X,Y)\}$,
assuming that applying $\bar{K}_{\varepsilon,L,\sigma}$ costs twice as much as
$K_{\varepsilon,L,\sigma}$. This measure of efficiency accounts for the fact that, with a given
compute budget, one can average over more estimators if each is cheaper to
compute. 
We will approximate this inefficiency by empirical averages over the $R$ realizations.
For comparison, the asymptotic variance $v(h,K_{\varepsilon,L})$
of the standard Hamiltonian Monte Carlo estimator 
will be approximated with the
\texttt{spectrum0.ar}
function of the \texttt{coda} R package \citep{codapackage} using $10,000$ iterations after a burn-in of $1,000$ for all examples. 
We will consider estimating first and second moments, i.e. set $h_{i}(x)=x_i$ and $h_{d+i}(x)=x_i^2$ for $i=1,\ldots,d$, and compare $i(\bar{\pi}_0,\bar{K}_{\varepsilon,L,\sigma})=\sum_{i=1}^{2d}i(h_{i},\bar{\pi}_0,\bar{K}_{\varepsilon,L,\sigma})$ with 
$v(K_{\varepsilon,L})=\sum_{i=1}^{2d}v(h_i,K_{\varepsilon,L})$ at possibly different parameter configurations. 
An important point to be illustrated in the following is that the parameters $\varepsilon$ and $L$ minimizing the asymptotic inefficiency $(L+2)v(K_{\varepsilon,L})$
might not necessarily be suitable for our proposed estimator. Lastly, we will employ the guideline of taking $k$ as the $90\%$ sample quantile of meeting times, obtained from a small number of preliminary runs, and setting $m = 10k$. 

\subsection{Toy examples \label{subsec:toy_example}}

We first investigate the scalability of the proposed approach in high
dimensions on a  standard Gaussian target distribution on $\mathbb{R}^d$, by
examining the average meeting time of stationary coupled chains generated by
(\ref{eqn:coupled_mixture}).  For simplicity, the parameters $\sigma=10^{-3}$
and $\gamma=1/20$ are taken as their default values.  To ensure stable
acceptance probabilities as $d\rightarrow\infty$ \citep{beskos2013optimal}, we
scale the step size as $\varepsilon=Cd^{-1/4}$ and select different constants
$C>0$ to induce a range of acceptance probabilities. The number of leap-frog
steps is taken as $L=1 + \lfloor \varepsilon^{-1} \rfloor$, which fixes the
integration time $\varepsilon L$ as approximately one.  For comparison, we
consider (\ref{eqn:coupled_mixture}) with $L=1$, as this corresponds to the
Metropolis-adjusted Langevin algorithm, and adopt the scaling
$\varepsilon^2=C^2d^{-1/3}$ \citep{roberts1998optimal}; see also Section
\ref{sec:coupling_mala} of supplementary material for an alternative coupling.
Lastly, we also consider coupled chains generated solely by the coupled random
walk Metropolis--Hastings kernel described in Section
\ref{subsec:Making-chains-meet}, with proposal variance scaled as
$\sigma^2=C^2d^{-1}$ \citep{roberts1997weak}.  The results displayed in Fig.
\ref{fig:gaussian} demonstrate the effectiveness of our coupling strategy in
high dimensions, and illustrates the appeal of Hamiltonian Monte Carlo 
kernels in high dimensional settings.

\begin{figure}
\begin{centering}
\begin{minipage}{0.3\textwidth}
\includegraphics[scale=0.35]{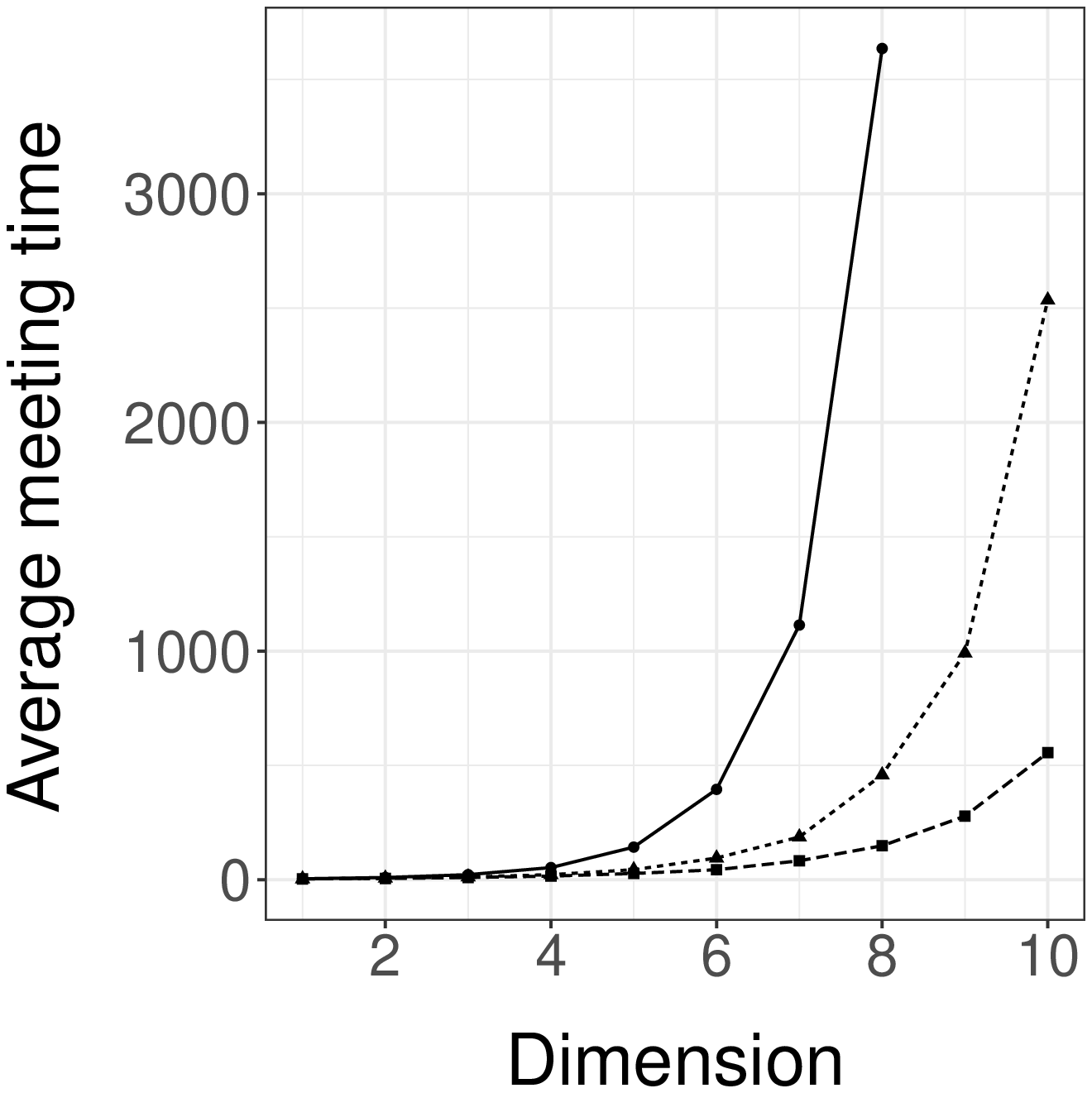}
%\figurebox{12pc}{12pc}{}[rwmh_meetingtime.eps]
\end{minipage}\hspace{0.5cm}
\begin{minipage}{0.3\textwidth}
\includegraphics[scale=0.35]{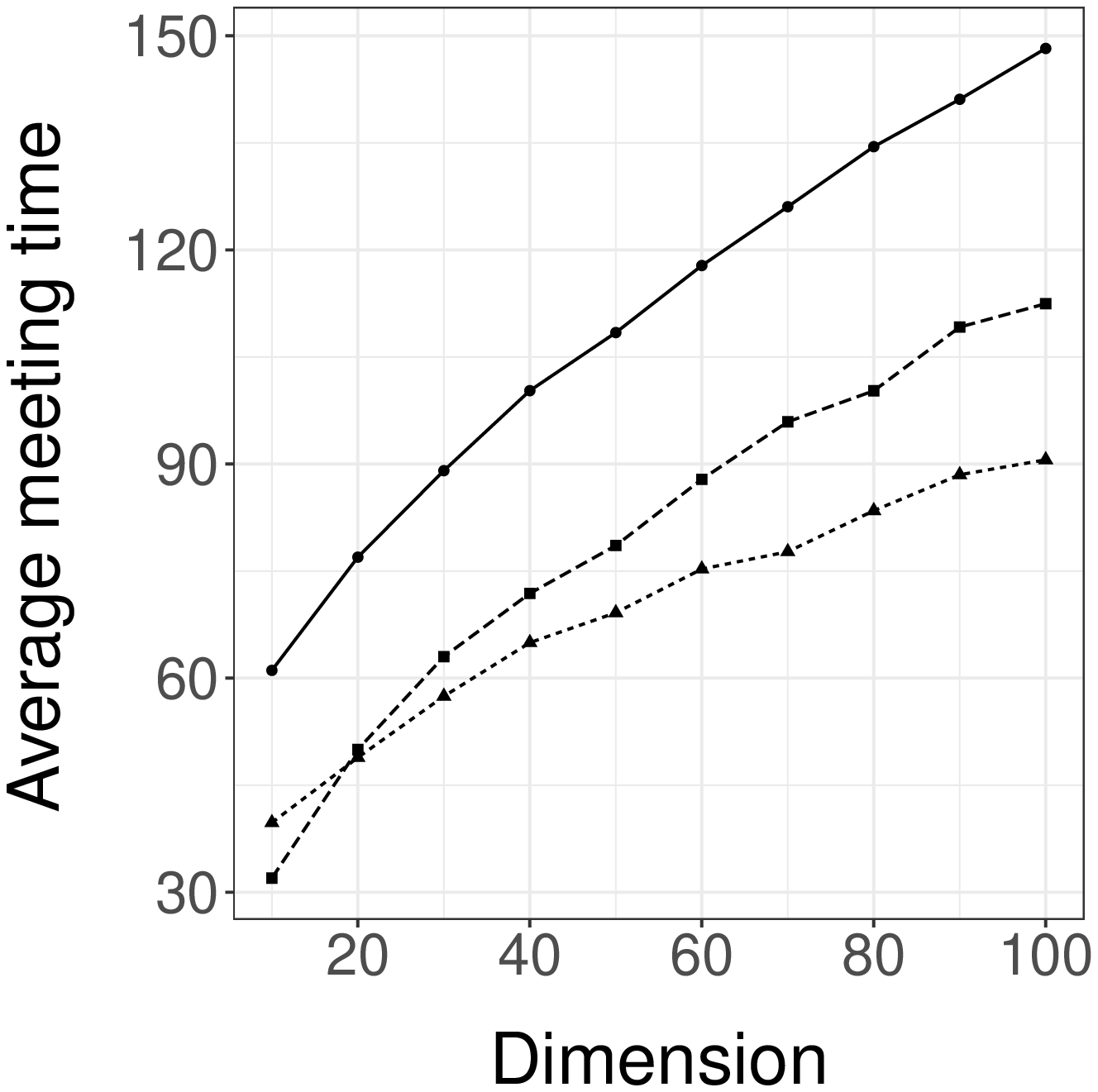}
%\figurebox{12pc}{12pc}{}[mala_meetingtime.eps]
\end{minipage}\hspace{0.5cm}
\begin{minipage}{0.3\textwidth}
\includegraphics[scale=0.35]{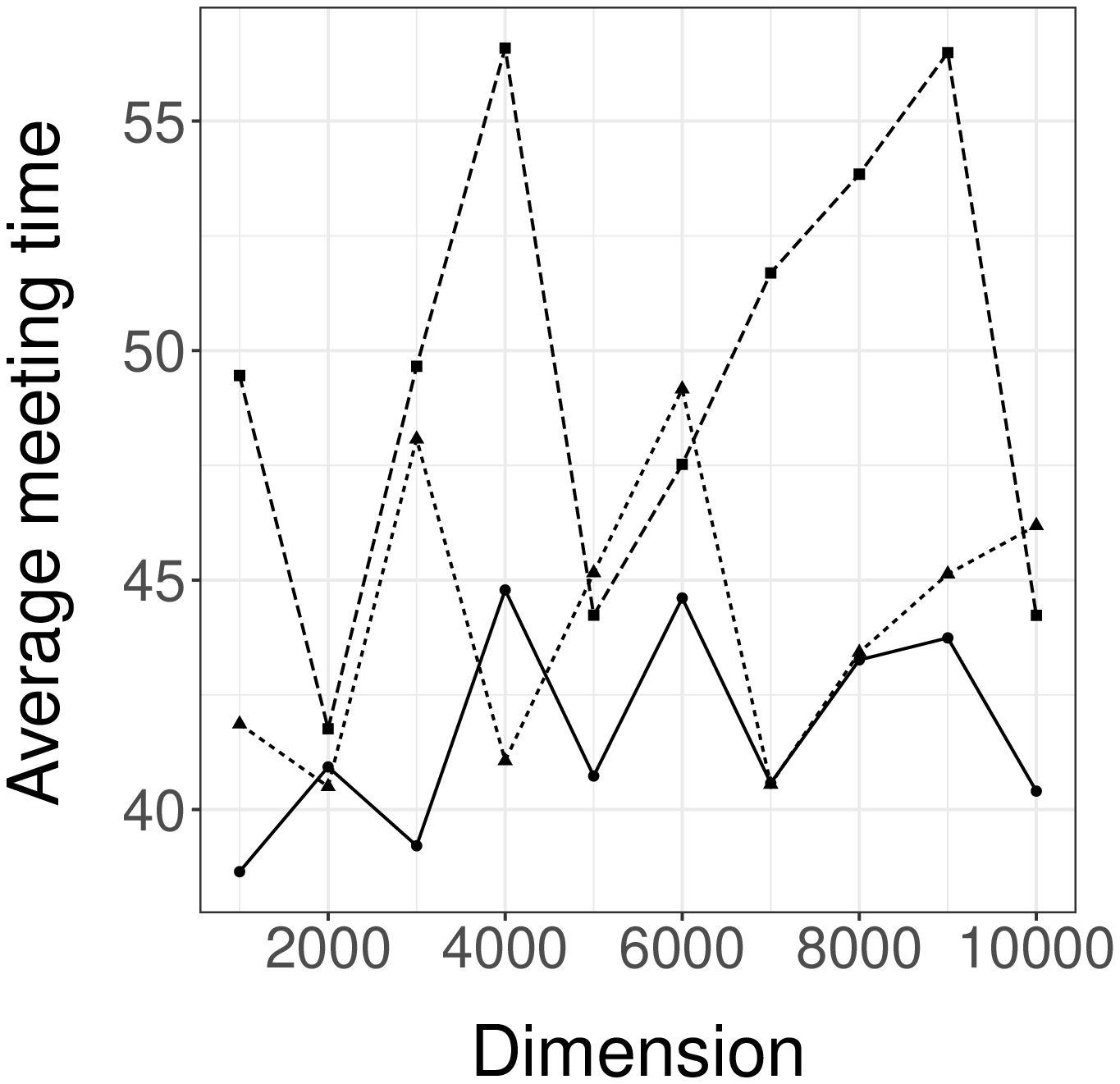}
%\figurebox{12pc}{12pc}{}[hmc_meetingtime.eps]
\end{minipage}
\par\end{centering}
% note that files may not be rotated
\caption{Gaussian example in Section \ref{subsec:toy_example}. Scaling of average meeting time with dimension for $1,000$ coupled chains based on random walk Metropolis--Hastings (left), Metropolis-adjusted Langevin algorithm (middle) and Hamiltonian Monte Carlo (right). 
The symbols and lines correspond to $C=1$ (dot-solid), $C=1.5$ (triangle-small dashes)
and $C=2$ (square-dashes).}
\label{fig:gaussian}
\end{figure}

Next we consider a banana-shaped target distribution on $\mathbb{R}^2$, whose potential is given by 
the Rosenbrock function $U(x_1,x_2)=(1-x_1)^2+10(x_2-x_1^2)^2$ for $(x_1,x_2)\in\mathbb{R}^2$. 
The aim here is to examine the utility of our proposed coupling for a
highly non-convex potential, 
and to explore the use of a new coupling for Hamiltonian Monte Carlo introduced
by \citet[Section 2.3.2]{bourabee:2018}. 
In contrast to Algorithm \ref{alg:coupled_hmc} which assigns the same initial
momentum to both chains, the latter samples an initial momentum
$P_n^1\sim\mathcal{N}(0_d,I_d)$ for the first chain, and sets the initial
momentum for the second chain as 
\begin{align*}
	P_n^2 = \begin{cases}
			P_n^1+\kappa\Delta_{n-1}, &\mbox{with probability }
			\frac{\mathcal{N}\left(\bar{\Delta}_{n-1}^{\top}P_n^1+\kappa|\Delta_{n-1}|;
			0,1\right)}{\mathcal{N}\left(\bar{\Delta}_{n-1}^{\top}P_n^1;0,1\right)},\\
			P_n^1-2(\bar{\Delta}_{n-1}^{\top}P_n^1)\bar{\Delta}_{n-1}, &\mbox{otherwise},
		      \end{cases} 
\end{align*}
where $\kappa>0$ is a tuning parameter, $\Delta_{n-1}=Q_{n-1}^1-Q_{n-1}^2$
denotes the difference between the chains at iteration $n-1$, and
$\bar{\Delta}_{n-1}=\Delta_{n-1}/|\Delta_{n-1}|$ the normalized difference.
Leap-frog integration and Metropolis--Hastings acceptance of the output are
then performed in the same way as Algorithm \ref{alg:coupled_hmc}; the
resulting coupled Hamiltonian Monte Carlo kernel is then employed in the
mixture (\ref{eqn:coupled_mixture}).  We simulate $1,000$ coupled chains,
initialized independently from the uniform distribution on $[-5,5]^2$, using
this new coupling with $\kappa=1$ and the previous one which corresponds 
to $\kappa=0$.  Employing the same parameters
$(\varepsilon,L,\sigma,\gamma)=(1/500, 500, 10^{-3}, 1/20)$ for both couplings,
we observe that the new coupling reduces the average meeting time from $158$ to
$52$. This example illustrates that the proposed methodology can be used beyond convex 
potentials, and that alternative couplings can result in significantly shorter meeting times.

\subsection{Logistic regression \label{subsec:Logistic-regression}}

We now consider a Bayesian logistic regression on the classic German credit
dataset, as in \citet{hoffman2014no}. After including all pairwise
interactions and performing standardization, the design matrix has $1,000$ rows
and $300$ columns.  Given covariates $x_i\in\mathbb{R}^{300}$, intercept
$a\in\mathbb{R}$ and coefficients $b\in\mathbb{R}^{300}$, each observation
$y_i\in\{0,1\}$ is modelled as an independent Bernoulli random variable with
probability of success $\{1+\exp(-a-b^{\top}x_i)\}^{-1}$. The prior is specified
as $a|s^2\sim\mathcal{N}(0,s^2), b|s^2\sim\mathcal{N}(0_{300},s^2I_{300})$
independently, and an Exponential distribution with rate $0.01$ for the
variance parameter $s^2$. 
The target $\pi$ is the posterior distribution of parameters $(a,b,\log s^2)$ on $\mathbb{R}^d$ with $d=302$. 

Initializing coupled chains independently from $\pi_0=\mathcal{N}(0_d,I_d)$,
for each parameter configuration
$(\varepsilon,L)\in\{0.01,0.0125,\ldots,0.04\}\times\{10,20,30\}$, we run $5$
pairs of coupled Hamiltonian Monte Carlo chains for $1,000$ iterations.  This
computation can be done independently in parallel for each configuration and
repeat; the output is displayed in the left panel of Fig. \ref{fig:logistic}.
Although multiple configurations lead to contractive
chains, it is not the case for 
$(\varepsilon,L)=(0.03,10)$ which are optimal parameters for Hamiltonian Monte Carlo. For configurations that yield distances that are
less than $10^{-10}$, we simulate $100$ meeting times in parallel using the
mixture kernel (\ref{eqn:coupled_mixture}) with 
$\sigma=10^{-3}$ and $\gamma=1/20$. We then select the parameter configuration
$(\varepsilon,L)=(0.0125,10)$ that gave the least average compute cost, taken
as $L+2$ times the average meeting time. 

To illustrate the impact of $\sigma$ and $\gamma$, 
%in the middle and right panel of Fig. \ref{fig:logistic}, 
we fix $(\varepsilon,L)=(0.0125,10)$ and examine the
distribution of meeting times as $\sigma$ or $\gamma$ varies. 
Decreasing $\sigma$ leads to larger meeting times: conservatively small values of $\sigma$ require
more iterations before the chains get close enough for the maximal coupling to
propose the same value with high probability. On the other hand,
if $\sigma$ was too large, 
large meeting times would be observed as 
random walk proposals would be rejected with high probability.
The middle panel of Fig. \ref{fig:logistic} suggests that the effectiveness of our coupling 
is not highly sensitive to the choice of $\sigma$, provided that it is small enough. 
Similarly, the right panel of Fig. \ref{fig:logistic} also shows stable meeting times 
for the range of values of $\gamma$ considered.

Finally, we produce $R=1,000$ coupled chains in parallel with
$(\varepsilon,L,\sigma,\gamma)=(0.0125,10,10^{-3},1/20)$ and compare the inefficiency of
our estimator with the asymptotic variance of the optimal Hamiltonian Monte
Carlo estimator for various choices of $k$ and $m$. The results, summarized in
Table \ref{table:logistic}, illustrate that bias removal comes at a cost
of increased variance, and that this can be reduced with appropriate choices of
$k$ and $m$.   Our guideline for $k$ and $m$ results in a relative inefficiency
of $1.05$ at an average compute cost of $3518$ applications of
$K_{\varepsilon,L,\sigma}$, or approximately $5$ minutes of computing time with
our implementation.
Therefore, thanks to unbiasedness, we can safely average over independent copies
of an estimator whose expected cost 
is of the order of a few thousand Hamiltonian Monte Carlo iterations.

\begin{figure}
\begin{centering}
\begin{minipage}{0.3\textwidth}
\includegraphics[scale=0.35]{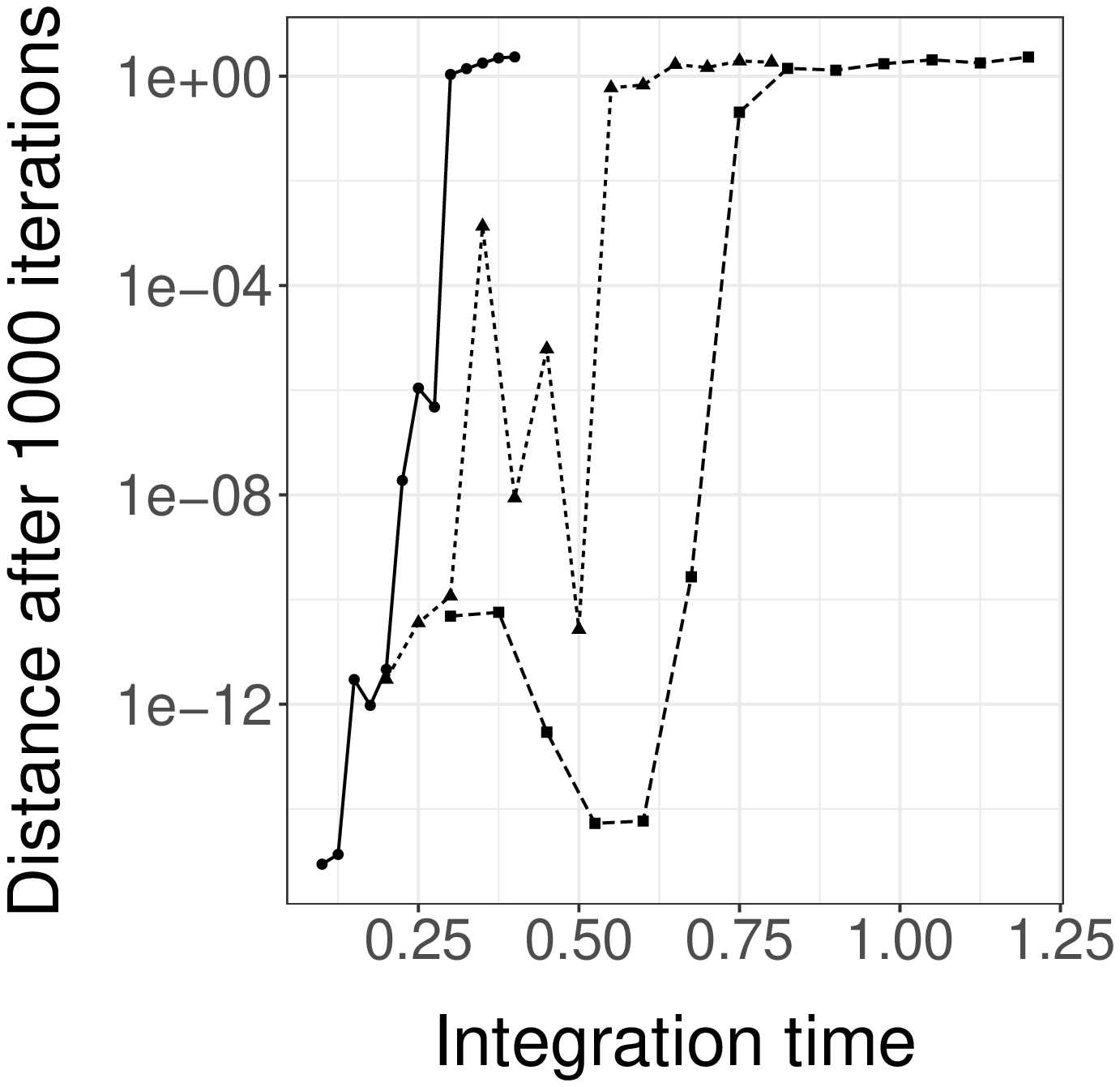}
%\figurebox{12pc}{12pc}{}[logistic_contraction.eps]
\end{minipage}\hspace{0.5cm}
\begin{minipage}{0.3\textwidth}
\includegraphics[scale=0.35]{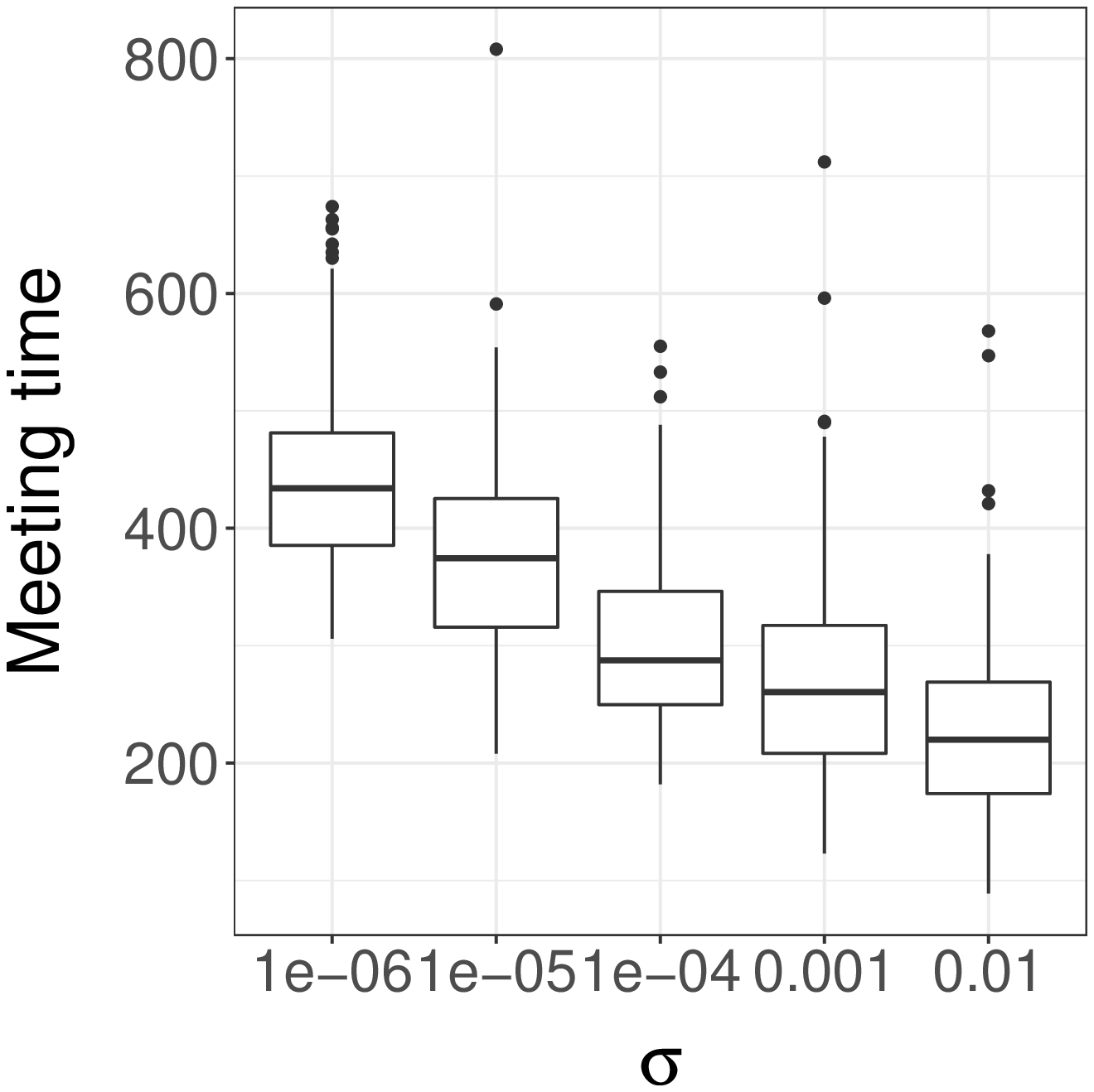}
%\figurebox{12pc}{12pc}{}[logistic_meetingtime.eps]
\end{minipage}\hspace{0.5cm}
\begin{minipage}{0.3\textwidth}
\includegraphics[scale=0.35]{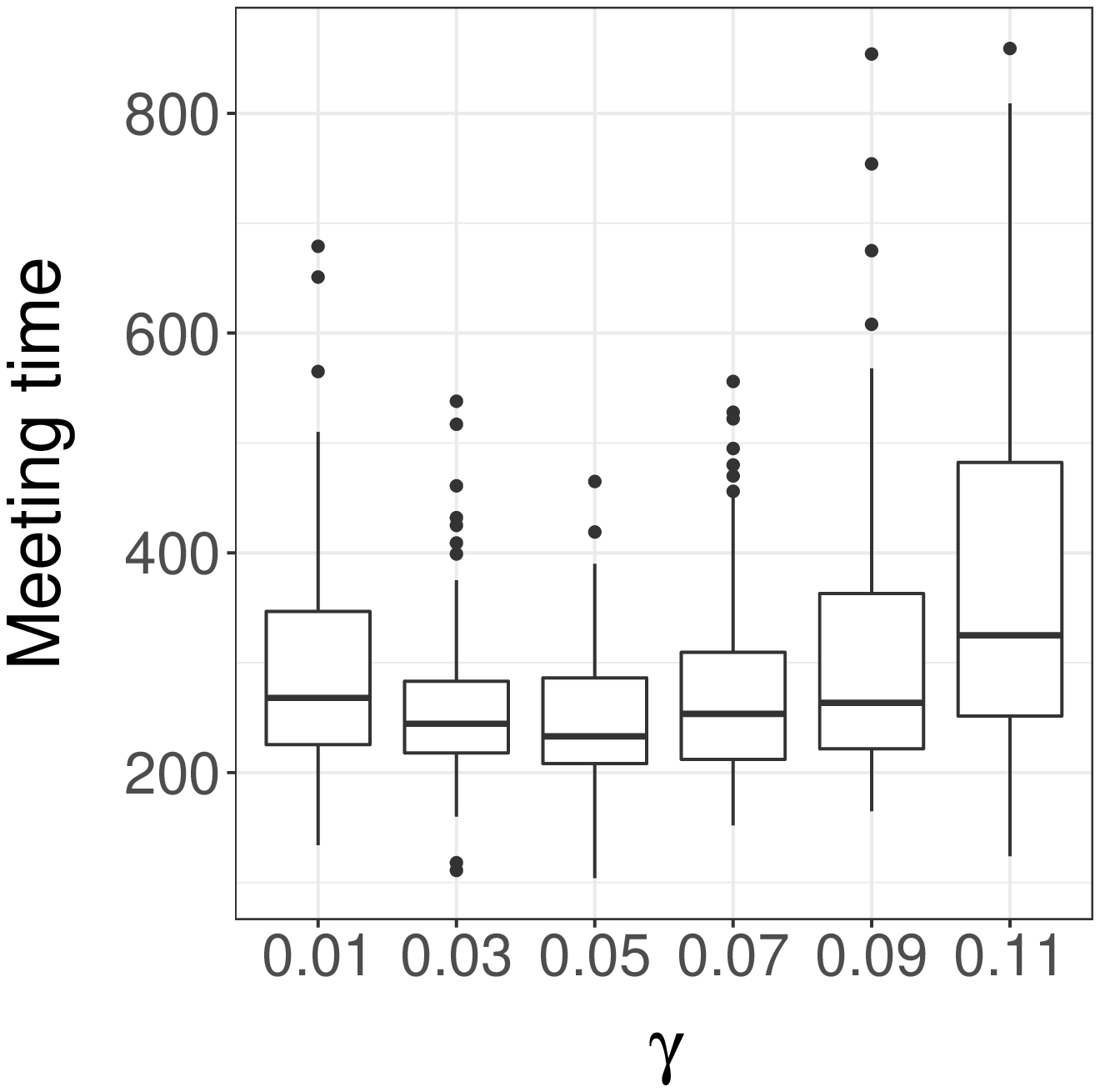}
%\figurebox{12pc}{12pc}{}[logistic_gamma_meetingtime.eps]
\end{minipage}
\par\end{centering}
% note that files may not be rotated
\caption{Logistic regression example in Section \ref{subsec:Logistic-regression}. Average distance between coupled chains at iteration $1,000$ against integration time $\varepsilon L$ (left). The symbols and lines correspond to $L=10$ (dot-solid), $L=20$ (triangle-small dashes)
and $L=30$ (square-dashes). Boxplot of meeting times as parameter $\sigma$ (middle) or $\gamma$ (right) varies.}
\label{fig:logistic}
\end{figure}

\begin{table}
\def~{\hphantom{0}}
\caption{Relative inefficiency of proposed estimator in logistic regression example\label{table:logistic}
}
\begin{center}
\begin{tabular}{ccccc}
%\\
% \\
$k$ & $m$ & Cost & Variance & Relative inefficiency \\[5pt]
$1$ & $k$ & $436$ & $4.0\times10^2$ & 1989.07 \\
$1$ & $5k$ & $436$ & $3.4\times10^2$ & 1671.93 \\
$1$ & $10k$ & $436$ & $2.8\times10^2$ & 1403.28 \\
$\mathrm{median}(\tau)$ & $k$ & 458 & $7.4\times10^0$ & 38.22 \\
$\mathrm{median}(\tau)$ & $5k$ & 1258 & ~$1.1\times10^{-1}$ & 1.58 \\
$\mathrm{median}(\tau)$ & $10k$ & 2298 & ~$4.5\times10^{-2}$ & 1.18 \\ 
$90\%\,\mathrm{quantile}(\tau)$ & $k$ & 553 & $6.0\times10^0$ & 38.11 \\ 
$90\%\,\mathrm{quantile}(\tau)$ & $5k$ & 1868 & ~$5.8\times10^{-2}$ & 1.23 \\ 
$90\%\,\mathrm{quantile}(\tau)$ & $10k$ & 3518 & ~$2.6\times10^{-2}$ & 1.05 \\ 
\end{tabular}
\caption*{Cost refers to the expected compute cost, variance denotes the sum of variances when estimating first and second moments, and relative inefficiency is the ratio of the asymptotic inefficiency $i(\bar{\pi}_0,\bar{K}_{\varepsilon,L,\sigma})$ with parameters $(\varepsilon,L,\sigma,\gamma)=(0.0125,10,10^{-3},1/20)$, to the asymptotic variance $v(K_{\varepsilon,L})$ with optimal parameters $(\varepsilon,L)=(0.03,10)$. These quantities were computed using $R=1,000$ independent runs, while the median and $90\%$ quantile of the meeting time were computed with $100$ preliminary runs.}
\end{center}
\end{table}

%\begin{table}
%\def~{\hphantom{0}}
%\tbl{Relative inefficiency of proposed estimator in logistic regression example}{%
%\begin{tabular}{ccccc}
%%\\
% \\
%$k$ & $m$ & Cost & Variance & Relative inefficiency \\[5pt]
%$1$ & $k$ & $436$ & $4.0\times10^2$ & 1989.07 \\
%$1$ & $5k$ & $436$ & $3.4\times10^2$ & 1671.93 \\
%$1$ & $10k$ & $436$ & $2.8\times10^2$ & 1403.28 \\
%$\mathrm{median}(\tau)$ & $k$ & 458 & $7.4\times10^0$ & 38.22 \\
%$\mathrm{median}(\tau)$ & $5k$ & 1258 & ~$1.1\times10^{-1}$ & 1.58 \\
%$\mathrm{median}(\tau)$ & $10k$ & 2298 & ~$4.5\times10^{-2}$ & 1.18 \\ 
%$90\%\,\mathrm{quantile}(\tau)$ & $k$ & 553 & $6.0\times10^0$ & 38.11 \\ 
%$90\%\,\mathrm{quantile}(\tau)$ & $5k$ & 1868 & ~$5.8\times10^{-2}$ & 1.23 \\ 
%$90\%\,\mathrm{quantile}(\tau)$ & $10k$ & 3518 & ~$2.6\times10^{-2}$ & 1.05 \\ 
%\end{tabular}}
%\label{table:logistic}
%\caption*{Cost refers to the expected compute cost, variance denotes the sum of variances when estimating first and second moments, and relative inefficiency is the ratio of the asymptotic inefficiency $i(\bar{\pi}_0,\bar{K}_{\varepsilon,L,\sigma})$ with parameters $(\varepsilon,L,\sigma,\gamma)=(0.0125,10,10^{-3},1/20)$, to the asymptotic variance $v(K_{\varepsilon,L})$ with optimal parameters $(\varepsilon,L)=(0.03,10)$. These quantities were computed using $R=1,000$ independent runs, while the median and $90\%$ quantile of the meeting time were computed with $100$ preliminary runs.}
%\end{table}

\subsection{Log-Gaussian Cox point processes \label{subsec:Cox-Process}}
We end with a challenging high dimensional application of 
Bayesian inference for log-Gaussian Cox point processes on a dataset
concerning the locations of $126$ Scot pine saplings in a natural forest in
Finland \citep{Moller:1998}. After discretizing the plot into an $n \times n$ regular grid, the number of points in each grid
cell $y_i\in\mathbb{N}$ is assumed to be conditionally independent, given a
latent intensity process $\Lambda_i, i\in\{1,\ldots,n\}^2$, and modelled as
Poisson distributed with mean $a\Lambda_i$, where $a=n^{-2}$ is the area of
each grid cell. The prior is specified by $\Lambda_i=\exp(X_i)$, where
$X_i,i\in\{1,\ldots,n\}^2$ is a Gaussian process with mean $\mu\in\mathbb{R}$
and exponential covariance function $\Sigma_{i,j}=s^2\exp\{-|i-j|/(nb)\}$ for
$i,j\in\{1,\ldots,n\}^2$. We will adopt the parameter values $s^2=1.91,b=1/33$
and $\mu=\log(126)-s^2/2$ estimated by \citet{Moller:1998} and infer the
posterior distribution of the latent process $X_i,i\in\{1,\ldots,n\}^2$ given
the count data and these hyperparameter values. 
We will consider three discretizations with $n\in\{16, 32, 64\}$, which 
correspond to target distributions $\pi$ on $\mathbb{R}^d$ with $d\in\{256, 1024, 4096\}$. 

Owing to the high dimensionality of this model, the mixing of random walk Metropolis--Hastings is known to be prohibitively slow \citep{Christensen:2002}, while the Metropolis-adjusted Langevin algorithm requires a computationally costly reparameterization to be effective \citep{Christensen:2005}. We will consider the use of Hamiltonian Monte Carlo and Riemann manifold Hamiltonian Monte Carlo with metric tensor $\Sigma^{-1}+a\exp(\mu+s^2/2)I_d$ \citep{girolami2011riemann}.
We proceed as in Section \ref{subsec:Logistic-regression} to seek parameter configurations 
$(\varepsilon,L)\in\{0.05,0.07,\ldots,0.45\}\times\{10,20,30\}$ that yield
contractive coupled chains with small compute cost, when initialized
independently from the prior distribution. Although both
algorithms have multiple configurations that result in contractive chains,
the parameters $\varepsilon$ and $L$ that were optimal for these methods
only led to contractive coupled Riemann manifold Hamiltonian Monte Carlo
chains for all three discretizations.  By simulating $100$ meeting times
with $\sigma=10^{-3}$ and $\gamma=1/20$ for configurations that yield
distances of less than $10^{-10}$, for $d\in\{256, 1024, 4096\}$
respectively, we select
$(\varepsilon,L)\in\{(0.11,10),(0.15,10),(0.17,10)\}$ for Hamiltonian Monte
Carlo, and $(\varepsilon,L)\in\{(0.11,10),(0.11,10),(0.13,10)\}$ for
Riemann manifold Hamiltonian Monte Carlo, which gave the smallest average
compute cost for each algorithm. The corresponding meeting times in the
left panel of Fig. \ref{fig:coxprocess} show the effectiveness of our
coupling strategy even in high dimensions.  The middle and right panels of
Fig. \ref{fig:coxprocess}, which display the meeting times of coupled
Riemann manifold Hamiltonian Monte Carlo chains for the finest
discretization, also illustrate the robustness of our coupling to the
choice of $\sigma$ and $\gamma$. 

With the above parameters and the guideline for choosing $k$ and $m$, we computed $R=1,000$ coupled chains in parallel for each algorithm and discretization. 
For $d\in\{256, 1024, 4096\}$ respectively, the relative inefficiency was found to be $11.00, 5.43, 2.73$ for 
Hamiltonian Monte Carlo, and $11.68, 7.85, 3.72$ for Riemann manifold Hamiltonian Monte Carlo. 
For the finest discretization, the average compute time was approximately $90$ and $20$ minutes with our implementation. Despite some loss of efficiency, 
the benefits of exploiting parallel computation for this problem is apparent since one can only run $4439$ and $714$ iterations of these algorithms respectively for the same compute time. 

\begin{figure}
% The arguments in the next line are {height}{optional width}{used only by OUP for typesetting}[filename, in directory art]
\begin{centering}
\begin{minipage}{0.3\textwidth}
\includegraphics[scale=0.35]{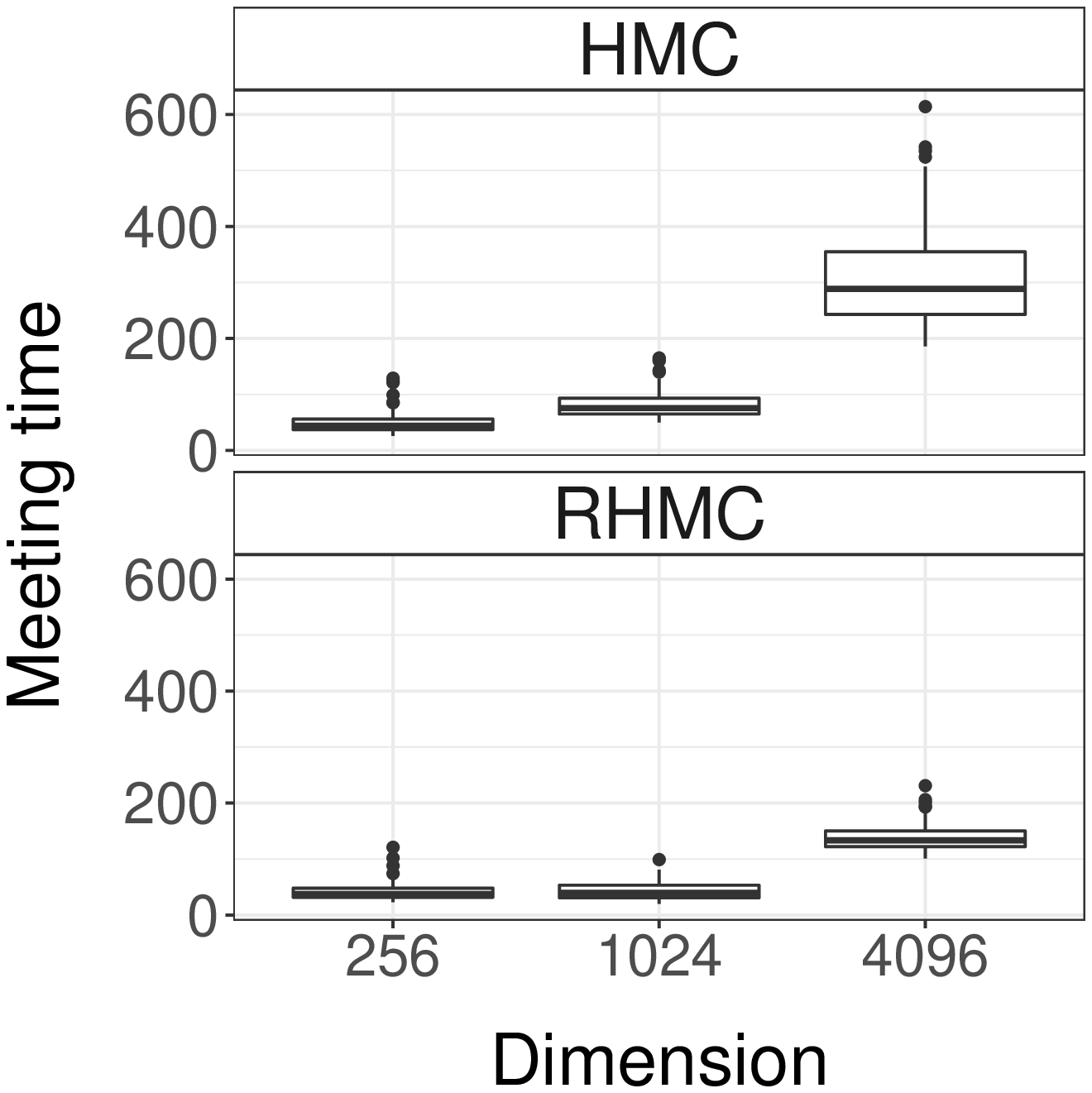}
%\figurebox{12pc}{12pc}{}[coxprocess_dim_meetingtime.eps]
\end{minipage}\hspace{0.5cm}
\begin{minipage}{0.3\textwidth}
\includegraphics[scale=0.35]{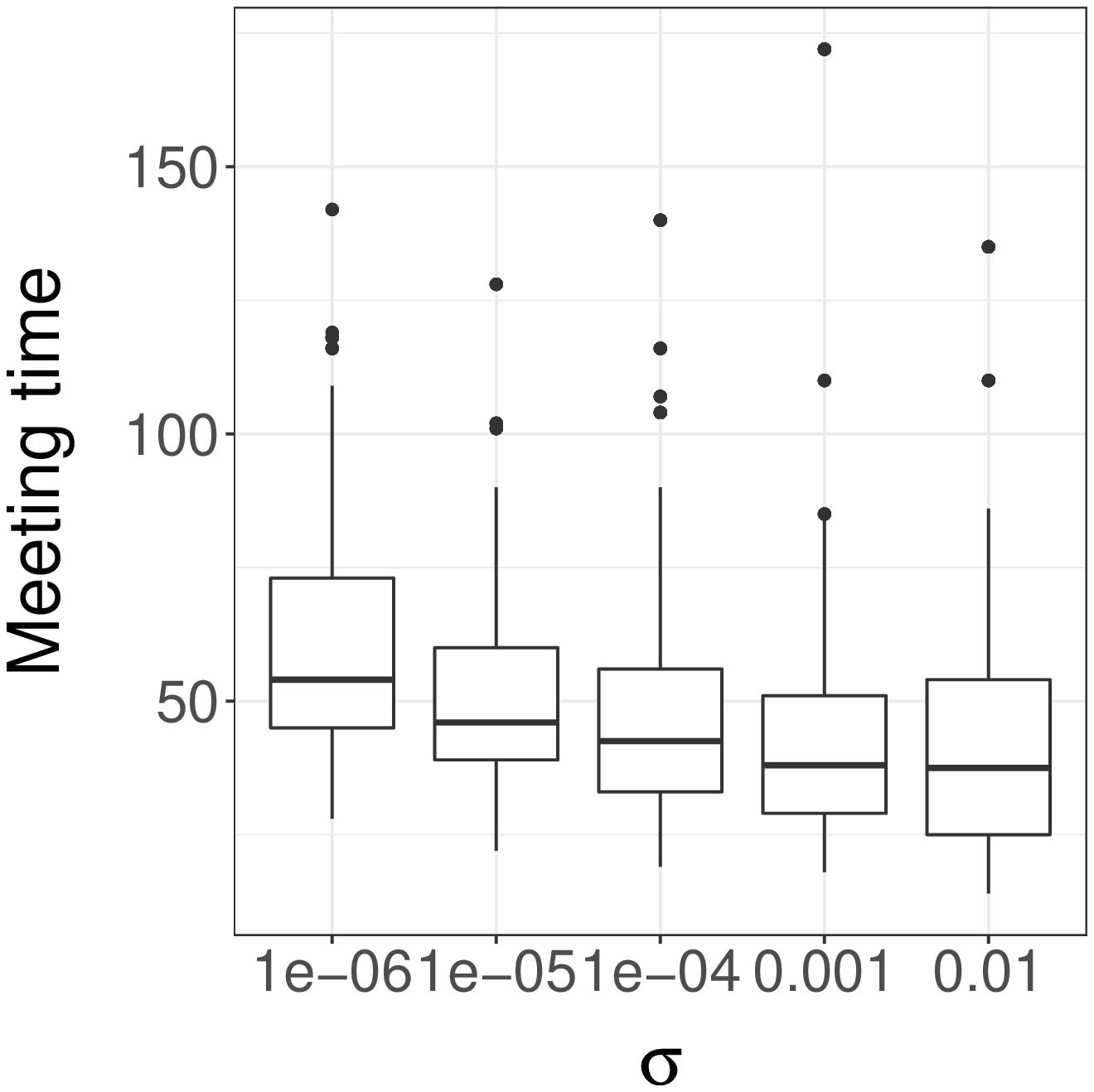}
%\figurebox{12pc}{12pc}{}[coxprocess_sigma_meetingtime.eps]
\end{minipage}\hspace{0.5cm}
\begin{minipage}{0.3\textwidth}
\includegraphics[scale=0.35]{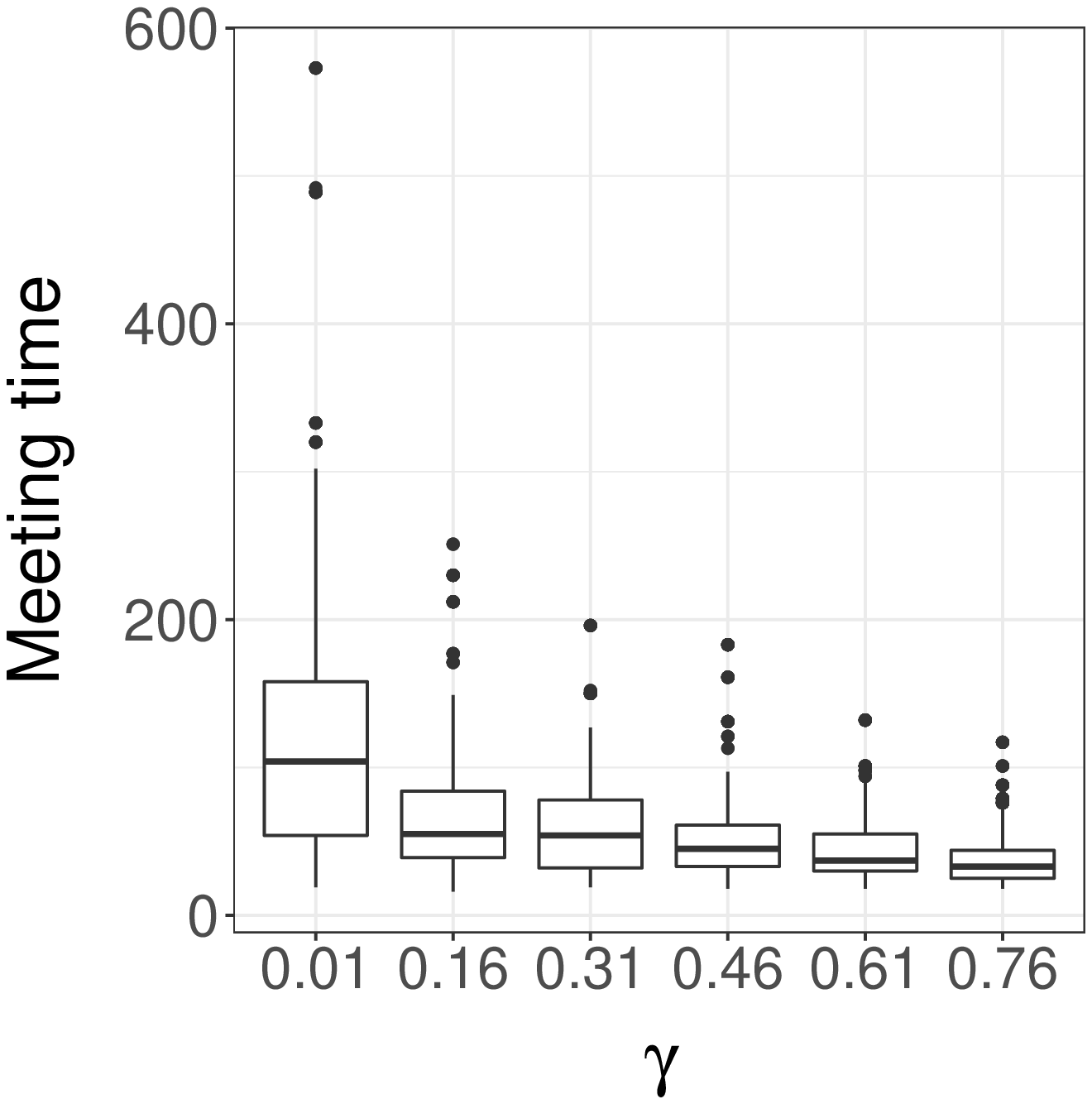}
%\figurebox{12pc}{12pc}{}[coxprocess_gamma_meetingtime.eps]
\end{minipage}\hspace{0.5cm}
\par\end{centering}
% note that files may not be rotated
\caption{Cox process example in Section \ref{subsec:Cox-Process}. Boxplot of meeting times for both algorithms and all three discretizations (left), and as parameter $\sigma$ (middle) or $\gamma$ (right) varies. }
\label{fig:coxprocess}
\end{figure}

\section{Discussion}

Construction of couplings could be explored for other variants of  
the Hamiltonian Monte Carlo method, such as the use of partial 
momentum refreshment \citep{horowitz1991generalized}, the adaptation 
of tuning parameters \citep{hoffman2014no}, different choices of 
kinetic energy \citep{livingstone2017kinetic}, and in combination with 
new sampling paradigms \citep{pollock2016scalable,fearnhead2016piecewise,vanetti2017piecewise}. 
Other ways of leveraging parallel hardware for Hamiltonian Monte Carlo
include the work in \citet{calderhead2014general}, which builds on \citet{tjelmeland2004using}
and focuses on parallel computation at each iteration of the algorithm. 
  
\section*{Acknowledgement}
The computations in this article were run on the Odyssey cluster supported by the FAS Division of Science, Research Computing Group at Harvard University.
Pierre E. Jacob gratefully acknowledges support by the National Science
Foundation through grant DMS-1712872. 
Both authors gratefully acknowledge support by the Army Research Office through grant W911NF-15-1-0172.
%Acknowledgements should appear after the body of the paper but before any appendices and be as brief as possible
%subject to politeness. In	formation, such as contract numbers, of no interest to readers, must
%be excluded.

\section*{Supplementary material}
\label{SM}
An R package is available at
\href{https://github.com/pierrejacob/debiasedhmc}{github.com/pierrejacob/debiasedhmc}
and contains the scripts used to produce the figures of this article.
%on the GitHub account of the second author. 
The supplementary material (available below) includes an alternative coupling 
for the Metropolis-adjusted Langevin algorithm, 
additional simulation results on truncated Gaussian distributions,
the proofs of Lemma \ref{lem:exact_contraction} and Theorems \ref{thm:relaxed_meeting}--\ref{thm:exact_meeting}, 
and notes on verifying the assumptions of Theorems \ref{thm:relaxed_meeting}--\ref{thm:exact_meeting} for 
target distributions given by posterior distributions of Bayesian logistic regression.

\bibliographystyle{abbrvnat}
\phantomsection\addcontentsline{toc}{section}{\refname}\bibliography{Biblio}

\appendix

\section{Coupling Metropolis-adjusted Langevin algorithm}\label{sec:coupling_mala}
We present an alternative to the construction in (\ref{eqn:coupled_mixture}) for the case $L=1$, which 
reduces to the Metropolis-adjusted Langevin algorithm with step size $\varepsilon^2>0$.
In this case, the coupled Hamiltonian Monte Carlo kernel $\bar{K}_{\varepsilon,1}$, introduced in Section \ref{sec:Hamiltonian-Monte-Carlo}, corresponds to a synchronous coupling of the proposal transition kernel $Q_{\varepsilon}(x,A)=\int_A\mathcal{N}\{y;x+\varepsilon^2\nabla\log\pi(x)/2,\varepsilon^2I_d\}dy, x\in\mathbb{R}^d, A\in\mathcal{B}(\mathbb{R}^d)$, associated to the Euler--Maruyama discretization of a $\pi$-invariant Langevin diffusion on $\mathbb{R}^d$ \citep{dalalyan_2017}. 

To construct a coupling of $K_{\varepsilon,1}(x,\cdot)$ and $K_{\varepsilon,1}(y,\cdot)$ that prompts exact meetings when $x,y\in\mathbb{R}^d$ are close, we can sample the proposals $(X^*,Y^*)$ from the maximal coupling of $Q_{\varepsilon}(x,\cdot)$ and $Q_{\varepsilon}(y,\cdot)$. 
Writing $\delta=|x-y|$, it follows from Assumption \ref{ass:potential} and the approximation in (\ref{eqn:TV_approx}) that 
\begin{align*}
	\mathrm{pr}(X^{*}=Y^{*})\geq1-\frac{(2+\beta\varepsilon^2)}{2(2\pi)^{1/2}}\frac{\delta}{\varepsilon}
+\mathcal{O}\left(\frac{\delta^{2}}{\varepsilon^{2}}\right)
\end{align*}
as $\delta/\varepsilon\rightarrow0$. 
As in Section \ref{subsec:Making-chains-meet}, the proposed values are then accepted with 
Metropolis--Hastings acceptance probabilities with a common uniform random variable for both chains. We denote the resulting coupled Markov transition kernel on $\{\mathbb{R}^d\times\mathbb{R}^d, \mathcal{B}(\mathbb{R}^d)\times\mathcal{B}(\mathbb{R}^d)\}$ as $\bar{K}_{\varepsilon}$. 
For some pre-specified threshold $\delta_0>0$, we can combine these coupled kernels by considering 
\begin{align*}
	\bar{K}\{(x,y),A\times B\} = 
	\mathbb{I}_{D_{\delta_0}^c}(x,y)\bar{K}_{\varepsilon,1}\{(x,y),A\times B\}
	+ \mathbb{I}_{D_{\delta_0}}(x,y)\bar{K}_{\varepsilon}\{(x,y),A\times B\}
\end{align*}
for $x,y\in\mathbb{R}^d$ and $A,B\in\mathcal{B}(\mathbb{R}^d)$, where $D_{\delta_0}=\{(x,y)\in\mathbb{R}^d\times\mathbb{R}^d : |x-y|\leq\delta_0\}$ and $D_{\delta_0}^c=\mathbb{R}^d\times\mathbb{R}^d\setminus D_{\delta_0}$. 
This coupled kernel admits the marginal Metropolis-adjusted Langevin algorithm kernel $K_{\varepsilon,1}$ as marginals, 
i.e. $\bar{K}\{(x,y),A\times\mathbb{R}^d\}=K_{\varepsilon,1}(x,A)$ and $\bar{K}\{(x,y),\mathbb{R}^d\times A\}=K_{\varepsilon,1}(y,A)$ 
for all $x,y\in\mathbb{R}^d$ and $A\in\mathcal{B}(\mathbb{R}^d)$, as this holds for 
both $\bar{K}_{\varepsilon,1}$ and $\bar{K}_{\varepsilon}$.

\section{Truncated Gaussian distribution \label{sec:Truncated-Normal-distribution}}
We investigate coupling Hamiltonian Monte Carlo on truncated Gaussian distributions constrained by quadratic inequalities. 
\citet{pakman2014exact} introduced an algorithm that generates 
trajectories which undergo exact Hamiltonian dynamics and bounce off the constraints. 
Implementing our method only involved simple modifications of their \texttt{tmg} R package
\citep{tmgpackage}. 

Following \citet{pakman2014exact}, we consider a bivariate standard Gaussian
distribution restricted to the set $\{(x_1,x_2)\in\mathbb{R}^2:
(x_1-4)^2/32+(x_2-1)^2/8\leq 1,
4x_{1}^{2}+8x_{2}^{2}-2x_{1}x_{2}+5x_{2}\geq1\}$ and use $\pi/2$ as the
trajectory length, as advocated in \citet{pakman2014exact}.  The left panel of Fig. \ref{fig:tmg}
displays $2,000$ Hamiltonian Monte Carlo samples.  Setting $(2,0)$ as the
initial position of both chains, the coupling proposed in Section
\ref{subsec:Coupled-Hamiltonian-dynamics} yields rapidly contracting chains
that are within machine precision in a few iterations.  A histogram of relaxed
meeting times, with respect to machine precision, is shown in the right panel
of Fig. \ref{fig:tmg}.  Our guideline for the choice of $k$ and $m$ yields
$k=6$ and $m=60$, based on $100$ draws of meeting times.  With these values, we
computed $R=1,000$ unbiased estimators and obtained an approximate asymptotic
inefficiency of $7.15$. In this case, the loss of efficiency is insignificant
compared to the Hamiltonian Monte Carlo algorithm, for which the asymptotic
variance was found to be approximately $6.41$.

\begin{figure}
% The arguments in the next line are {height}{optional width}{used only by OUP for typesetting}[filename, in directory art]
\begin{centering}
\begin{minipage}{0.5\textwidth}
\includegraphics[scale=0.5]{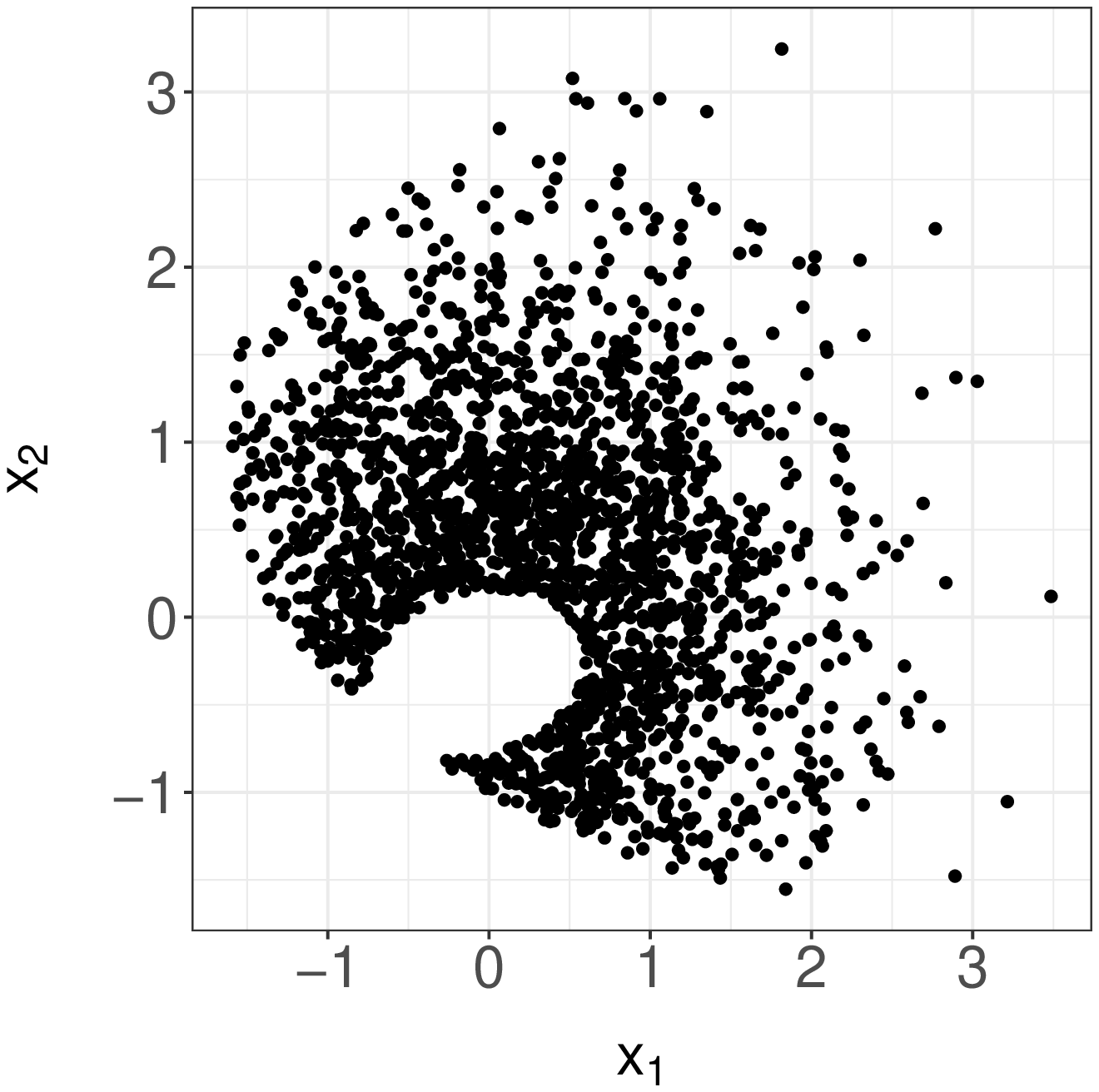}
%\figurebox{15pc}{15pc}{}[tmg_dist.eps]
\end{minipage}\hspace{-1cm}
\begin{minipage}{0.5\textwidth}
\includegraphics[scale=0.5]{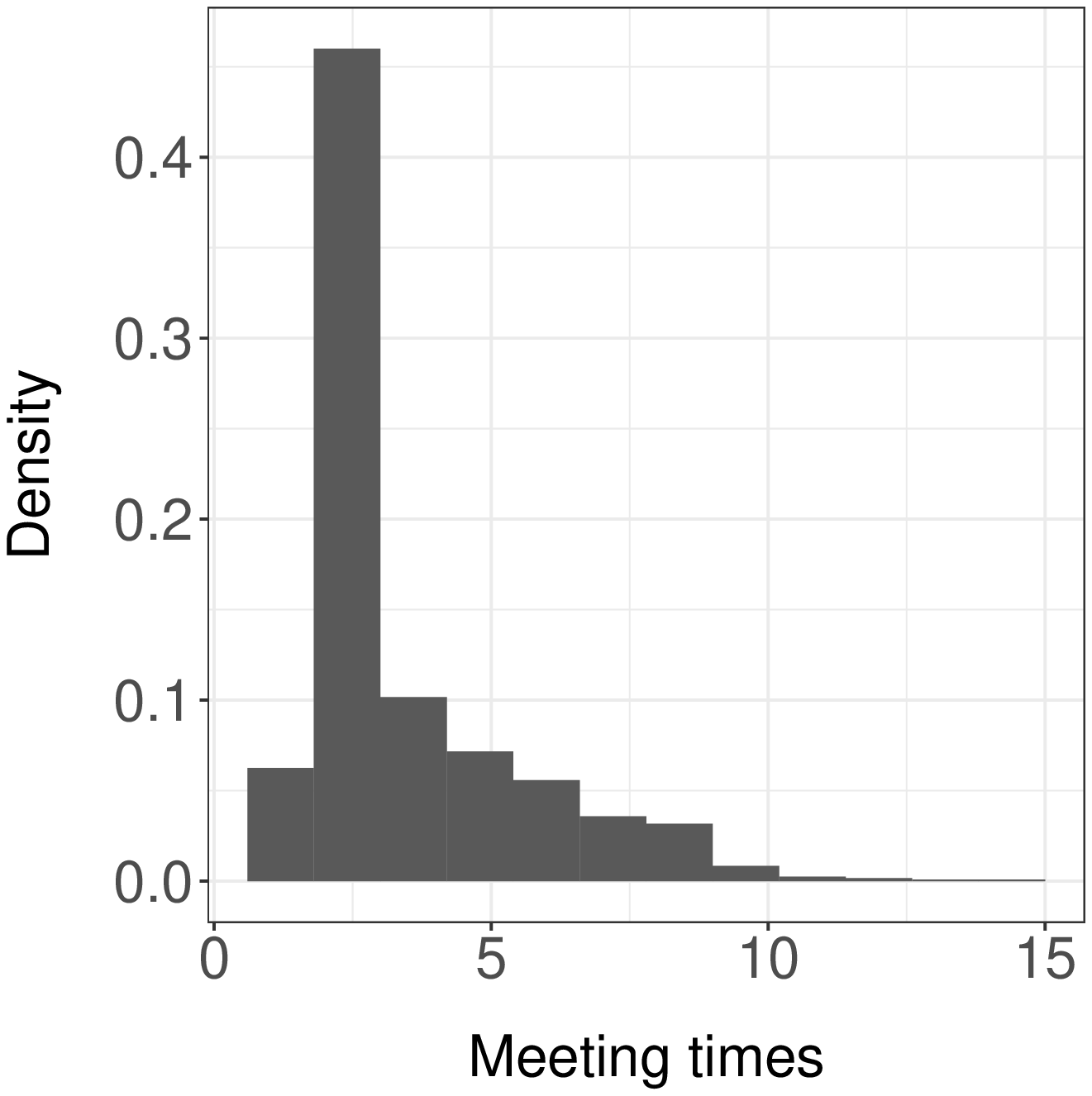}
%\figurebox{15pc}{15pc}{}[tmg_meetingtimes.eps]
\end{minipage}
\par\end{centering}
% note that files may not be rotated
\caption{Truncated Gaussian example in Section \ref{sec:Truncated-Normal-distribution}. Scatter plot of $2,000$ Hamiltonian Monte Carlo samples approximating a Gaussian distribution truncated by quadratic constraints (left). Histogram of relaxed meeting
times for $1,000$ coupled Hamiltonian Monte Carlo chains (right).}
\label{fig:tmg}
\end{figure}

\section{Intermediate results}\label{sec:intermediate}
\begin{proof}[Proof of Lemma \ref{lem:exact_contraction}]
Take $(q_{0}^{1},q_{0}^{2},p_{0})\in A$. Applying Taylor's theorem on $\Delta(t)$ around $t=0$ gives 
\begin{align*}
\Delta(t)=\Delta(0)-\frac{1}{2}t^{2}G_{0}-\frac{1}{6}t^{3}G_{*}
\end{align*}
for some $t_{*}\in(0,t)$, where $G_{0} = \nabla U(q_{0}^{1})-\nabla U(q_{0}^{2})$
and 
\begin{align*}
G_{*} = \nabla^{2}U\{q^{1}(t_{*})\}p^{1}(t_{*})-\nabla^{2}U\{q^{2}(t_{*})\}p^{2}(t_{*}).
\end{align*}
We will control each term of the expansion 
\begin{align*}
|\Delta(t)|^{2}=|\Delta(0)|^{2}-t^{2}\Delta(0)^{\top}G_{0}-\frac{1}{3}t^{3}\Delta(0)^{\top}G_{*}+\frac{1}{4}t^{4}|G_{0}|^{2}+\frac{1}{6}t^{5}G_{0}^{\top}G_{*}+\frac{1}{36}t^{6}|G_{*}|^{2}.
\end{align*}
Using strong convexity, the Lipschitz assumption and Young's inequality
\begin{align*}
|\Delta(t)|^{2}\leq\left(1-\alpha t^{2}+\frac{1}{6}t^{3}+\frac{1}{4}\beta^{2}t^{4}+\frac{1}{12}\beta^{2}t^{5}\right)|\Delta(0)|^{2}+\left(\frac{1}{6}t^{3}+\frac{1}{12}t^{5}+\frac{1}{36}t^{6}\right)|G_{*}|^{2}.
\end{align*}
By Young's inequality and the Lipschitz assumption 
\begin{align*}
|G_{*}|^{2} & \leq2\|\nabla^{2}U\{q^{1}(t_{*})\}\|_{2}^{2}|p^{1}(t_{*})|^{2}+2\|\nabla^{2}U\{q^{2}(t_{*})\}\|_{2}^{2}|p^{2}(t_{*})|^{2}\\
 & \leq2\beta^{2}\left\lbrace|\Phi_{t_{*}}^{*}(q_{0}^{1},p_{0})|^{2}+|\Phi_{t_{*}}^{*}(q_{0}^{2},p_{0})|^{2}\right\rbrace\\
 & \leq2\beta^{2}\sup_{(q_{0}^{1},q_{0}^{2},p_{0})\in A}\left\lbrace|\Phi_{t_{*}}^{*}(q_{0}^{1},p_{0})|^{2}+|\Phi_{t_{*}}^{*}(q_{0}^{2},p_{0})|^{2}\right\rbrace
\end{align*}
where $\|\cdot\|_{2}$ denotes the spectral norm. The above supremum
is attained by continuity of the mapping $(q,p)\mapsto\Phi_{t_{*}}^{*}(q,p)$.
The claim (\ref{eq:exact_contract_small_t}) follows by combining
both inequalities and taking $t$ sufficiently small.
\end{proof}

As noted by an anonymous reviewer, inspection of the proof of Lemma \ref{lem:exact_contraction} reveals that one can relax 
local strong convexity in Assumption \ref{ass:convexity} to the condition 
\begin{align}\label{eqn:relax_convexity}
\left(q-q'\right)^{\top}\left\lbrace\nabla U(q)-\nabla U(q')\right\rbrace\geq f(|q-q'|)
\end{align}
for all $q,q'\in S$, where $f:\mathbb{R}_+\rightarrow\mathbb{R}_+$ is a function satisfying $f(x)>0$ whenever $x>0$ and 
$f(x)\geq Cx^2$ for some $C>0$ and all $x\in\mathbb{R}_+$. We now concern ourselves with an interpretation of 
(\ref{eqn:relax_convexity}). We shall assume in the following that $S$ contains a local mode, i.e. 
there exists $q^*\in S$ such that $\nabla U(q^*)=0$. It can be shown that strong convexity on $S$ is equivalent to 
\begin{align}
U(q) \geq U(q') + (q-q')^{\top}\nabla U(q')+\frac{\alpha}{2}|q-q'|^2
\end{align}
for all $q,q'\in S$. This implies $U(q)\geq U(q^*)+\alpha|q-q^*|^2/2$ for all $q\in S$, which can be seen as having the ratio of the target density and the Gaussian density $q\mapsto\mathcal{N}(q;q^*,\alpha^{-1}I_d)$ being upper bounded on $S$. 

Suppose additionally that $S$ is convex and $f$ is homogeneous of degree $k\in\mathbb{N}$, i.e. $f(cx)=c^kf(x)$ for all $c\in\mathbb{R}_+$ and $x\in\mathbb{R}_+$.
Fix $q,q'\in S$ and define the function $g(c)=U\{q'+c(q-q')\}$ for $c\in[0,1]$. We will write its derivative as 
$g'(c)=(q-q')^{\top}\nabla U\{q'+c(q-q')\}$. 
Applying (\ref{eqn:relax_convexity}) and homogeneity of $f$ gives $g'(c)\geq g'(0)+c^{k-1}f(|q-q'|)$. 
By continuity of $\nabla U$ and fundamental theorem of calculus 
\begin{align}
U(q)=g(1)=g(0)+\int_0^1g'(c)dc&\geq g(0) + g'(0) + f(|q-q'|)\int_0^1c^{k-1}dc \notag\\
&\geq U(q') + (q-q')^{\top}\nabla U(q') + \frac{1}{k}f(|q-q'|)\label{eqn:relaxed_first_order}
\end{align}
for all $q,q'\in S$. Therefore this implies that the ratio of the target density and the function $q\mapsto \exp\{-f(|q-q^*|)/k\}$ 
is upper bounded on $S$. Note also that (\ref{eqn:relaxed_first_order}) only implies $\left(q-q'\right)^{\top}\left\lbrace\nabla U(q)-\nabla U(q')\right\rbrace\geq (2/k)f(|q-q'|)$ for all $q,q'\in S$, so (\ref{eqn:relax_convexity}) and (\ref{eqn:relaxed_first_order}) are only equivalent when $k=2$.
This completes our discussion of (\ref{eqn:relax_convexity}).

To prove Theorem \ref{thm:relaxed_meeting}, we first establish the following intermediate result. 
For any measurable function $f:\varOmega\rightarrow\mathbb{R}$ and subset $A\subseteq \varOmega$,
we will write its level sets as $L_{\ell}(f)=\{x\in\varOmega:f(x)\leq\ell\}$ for $\ell\in\mathbb{R}$ and its restriction to $A$ as $f_A:A\rightarrow\mathbb{R}$. 

\begin{proposition}\label{prop:relaxed_meeting}
Suppose that the potential $U$ satisfies Assumptions \ref{ass:potential}--\ref{ass:convexity}.  
Then for any $\delta>0$, ${u}_0>\inf_{q\in S}U(q)$ and ${u}_1<\sup_{q\in S}U(q)$
with ${u}_0<{u}_1$, 
there exist $\bar{\varepsilon}>0$ and $\bar{L}\in\mathbb{N}$ such that for any 
$\varepsilon\in(0,\bar{\varepsilon})$ and $L\in\mathbb{N}$ satisfying
$\varepsilon L<\bar{\varepsilon} \bar{L}$, there exist $v_0 \in(u_0,u_1), n_0\in\mathbb{N}$ and $\omega\in(0,1)$
%a compact set $C\in\mathcal{B}(\mathbb{R}^d)$ with positive Lebesgue measure 
such that
\begin{align}\label{eqn:relaxed_meeting}
	\inf_{q^1,q^2\in S_0}\bar{K}_{\varepsilon,L}^{n_0}\{(q^1,q^2), {D}_{\delta}\} \geq \omega,
\end{align}
where $S_0 = L_{v_0}(U_S)$ is compact with positive Lebesgue measure, 
\begin{align*}
	\bar{K}_{\varepsilon,L}^{n}\{(q^1,q^2),A^1\times A^2\}=\mathrm{pr}_{\varepsilon,L}\{(Q_n^1,Q_n^2)\in A^1\times A^2\mid (Q_0^1,Q_0^2)=(q^1,q^2)\}
\end{align*}
denotes the $n$-step transition probabilities of the coupled chain, and ${D}_{\delta}=\{(q,q')\in\mathbb{R}^d\times\mathbb{R}^d: |q-q'|\leq \delta\}$.
\end{proposition}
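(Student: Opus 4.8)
The plan is to iterate the single-step position contraction of Lemma \ref{lem:exact_contraction}, lower bounding at each iteration the probability of a ``good'' coupled step that simultaneously contracts the two chains and keeps them inside the strong-convexity region $S$, and then to show that a fixed number $n_0$ of such steps suffices to enter $D_\delta$, with the overall probability bounded below uniformly over the compact set of starting pairs.

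First I would fix the level and the geometry. Given $\delta,u_0,u_1$, set $v_0=(u_0+u_1)/2\in(u_0,u_1)$ and $S_0=L_{v_0}(U_S)$; this is compact as the intersection of the closed set $\{U\le v_0\}$ with the compact $S$, and has positive Lebesgue measure because $v_0>\inf_S U$, $U$ is continuous, and $S$ has positive measure by Assumption \ref{ass:convexity}. Let $S_1=L_{u_1}(U_S)\subseteq S$ and $D_0=\mathrm{diam}(S_1)$. Fixing a momentum radius $r_0$, I would apply Lemma \ref{lem:exact_contraction} to the compact set $A=S_1\times S_1\times\{p:|p|\le r_0\}$ to obtain a trajectory length $T>0$ and a rate $\rho\in[0,1)$ for which (\ref{eq:exact_contract_small_t}) holds; since shrinking the momentum ball only shrinks $A$, the same $T,\rho$ remain valid for every radius $r\le r_0$. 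Crucially, the contraction enters Lemma \ref{lem:exact_contraction} only through the vanishing of $|p^1(0)-p^2(0)|^2$, so it holds for the common momentum of Algorithm \ref{alg:coupled_hmc} irrespective of its magnitude. I would then set $n_0=\lceil\log(\delta/(2D_0))/\log\rho\rceil$, so that applying the factor $\rho$ a total of $n_0$ times reduces $D_0$ below $\delta/2$.

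Next I would transfer this to the leap-frog chain and control the stay in $S$. For one coupled step started at $(q^1,q^2)$ with common momentum $p$ and $\varepsilon L\le T$, the triangle inequality with (\ref{eq:leapfrog_traj_error}) gives $|\hat\Phi_{\varepsilon,L}^{\circ}(q^1,p)-\hat\Phi_{\varepsilon,L}^{\circ}(q^2,p)|\le\rho|q^1-q^2|+2C_3\varepsilon^2$, so summing the geometric series shows $n_0$ such steps reach distance at most $\delta/2+2C_3\varepsilon^2/(1-\rho)\le\delta$ once $\varepsilon$ is small. To keep both chains in $S$ I would restrict to the event that at every step $|P_n^*|\le r$ and both proposals are accepted. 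On this event, energy conservation of the exact flow together with the Hamiltonian error bound (\ref{eq:leapfrog_hamiltonian_error}) gives $U(q_{j+1}^i)\le U(q_j^i)+r^2/2+C_4\varepsilon^2$, so after $n_0$ steps the potential has risen by at most $n_0(r^2/2+C_4\varepsilon^2)$; choosing $r$ and $\varepsilon$ small enough that this is at most $u_1-v_0$ keeps the positions in $L_{u_1}(U_S)\subseteq S$, which is exactly what Lemma \ref{lem:exact_contraction} needs at the start of each step. The probability of both acceptances is at least the worst-case ratio in (\ref{eq:MH_acceptance}), which by (\ref{eq:leapfrog_hamiltonian_error}) is bounded below by $\exp(-C_4\varepsilon^2)$, and the common uniform variable makes this cost a single factor per step. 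Since $\{|P_n^*|\le r\}$ has fixed positive probability and the acceptance constants depend continuously on the compactly varying state, the per-step good probability is bounded below uniformly on $S_0\times S_0$, and the product over the fixed number $n_0$ of steps yields the uniform bound $\omega>0$ in (\ref{eqn:relaxed_meeting}).

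The main obstacle is exactly the requirement that both chains remain in $S$ throughout the $n_0$ iterations, since Lemma \ref{lem:exact_contraction} supplies contraction only when the current positions lie in $S$. The delicate point is the ordering: I must fix $n_0$ first from $\rho,\delta,D_0$ alone, and only afterwards shrink the momentum radius $r$ and the step size $\varepsilon$ so that the accumulated potential increase $n_0(r^2/2+C_4\varepsilon^2)$ fits within the available margin $u_1-v_0$. This ordering breaks the circular dependence between the number of steps required for contraction and the excursion budget, and it also explains why the strict inequalities $\inf_S U<u_0<u_1<\sup_S U$ are needed, so that $S_0$ sits strictly inside $S_1\subseteq S$ with room to spare. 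Identifying $L_{u_1}(U_S)$ with a subset of $S$ from which the short, low-momentum trajectories cannot escape is the one geometric step that must be argued with care.
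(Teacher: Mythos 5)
Your proposal is correct and takes essentially the same route as the paper's proof: one-step contraction from Lemma \ref{lem:exact_contraction} applied on a compact position--momentum set, transfer to the accepted leap-frog chain via the error bounds (\ref{eq:leapfrog_traj_error})--(\ref{eq:leapfrog_hamiltonian_error}) and the acceptance probability (\ref{eq:MH_acceptance}), and an energy-budget argument (per step the potential can rise by at most the kinetic bound $r^2/2$ plus the discretization error $C_4\varepsilon^2$) keeping both chains in $L_{u_1}(U_S)\subseteq S$ for the $n_0$ iterations, with $n_0$ fixed from $\rho$, $\delta$ and the diameter before shrinking $r$ and $\varepsilon$. The only deviations are bookkeeping: you retain the additive leap-frog error and sum a geometric series where the paper absorbs it into a slightly larger multiplicative rate $\rho_1$, and you bound the per-step success probability by a product of the momentum-ball and acceptance probabilities where the paper invokes Fr\'echet's inequality.
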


\begin{proof}[Proof of Proposition \ref{prop:relaxed_meeting}]
Suppose that the chains $(Q_n^1)_{n\geq 0}, (Q_n^2)_{n\geq 0}$ are initialized at 
$Q_0^1=q^1\in S$ and $Q_0^2=q^2\in S$. Let $K(p)=|p|^{2}/2$ denote the kinetic energy function.
By compactness of $A=S\times S\times L_{k_0}(K)$, for some $k_0>0$ to be specified, it follows from Lemma \ref{lem:exact_contraction} that there exists a trajectory length $T>0$ such that for any $t\in(0,T]$, there exists 
$\rho_0\in[0,1)$ satisfying
\begin{align*}
|\Phi_{t}^{\circ}(Q_{0}^{1},P_{1}^*)-\Phi_{t}^{\circ}(Q_{0}^{2},P_{1}^*)|\leq\rho_0|Q_{0}^{1}-Q_{0}^{2}|
\end{align*}
for all $(Q_{0}^{1},Q_{0}^{2},P_{1}^*)\in A$.
Considering a fixed integration time $t\in(0,T]$, there exists $\omega_1\in(0,1)$ such that for any $\varepsilon>0$ and $L\in\mathbb{N}$
\begin{align*}
\mathrm{pr}_{\varepsilon,L}\left\lbrace |\Phi_{t}^{\circ}(Q_{0}^{1},P_{1}^*)-\Phi_{t}^{\circ}(Q_{0}^{2},P_{1}^*)|\leq\rho_0|Q_{0}^{1}-Q_{0}^{2}|\mid (Q_0^1,Q_0^2)=(q^1,q^2)\right\rbrace \geq \omega_1.
\end{align*}
By the triangle inequality, the pathwise error bound of the leap-frog integrator (\ref{eq:leapfrog_traj_error}) and compactness of $A$, there exist $\varepsilon_{1}>0$ and $\rho_{1}\in[0,1)$ such that 
\begin{align*}
\mathrm{pr}_{\varepsilon,L}\left\lbrace |\hat{\Phi}_{\varepsilon,L}^{\circ}(Q_{0}^{1},P_{1}^*)-\hat{\Phi}_{\varepsilon,L}^{\circ}(Q_{0}^{2},P_{1}^*)|\leq\rho_1|Q_{0}^{1}-Q_{0}^{2}|\mid (Q_0^1,Q_0^2)=(q^1,q^2)\right\rbrace \geq \omega_1
\end{align*}
for $\varepsilon\in(0,\varepsilon_{1})$ and $L\in\mathbb{N}$ satisfying $\varepsilon L=t$.
Using the Hamiltonian error bound of the leap-frog integrator  (\ref{eq:leapfrog_hamiltonian_error}) and compactness of $A$, it follows from 
(\ref{eq:MH_acceptance}) that there exist $\varepsilon_{2}\in(0,\varepsilon_{1}]$ and $\omega_2 \in (0,\omega_1)$ such that 
\begin{align*}
\mathrm{pr}_{\varepsilon,L}\left\lbrace Q_1^1=\hat{\Phi}_{\varepsilon,L}^{\circ}(Q_{0}^{1},P_{1}^*), Q_1^2=\hat{\Phi}_{\varepsilon,L}^{\circ}(Q_{0}^{2},P_{1}^*)\mid (Q_0^1,Q_0^2)=(q^1,q^2)\right\rbrace \geq 1-\omega_2
\end{align*}
for $\varepsilon\in(0,\varepsilon_{2})$ and $L\in\mathbb{N}$ satisfying
$\varepsilon L=t$. Noting that 
\begin{align*}
\left\lbrace |Q_1^1-Q_1^2|\leq\rho_1|Q_{0}^{1}-Q_{0}^{2}| \right\rbrace &\supseteq
\left\lbrace |\hat{\Phi}_{\varepsilon,L}^{\circ}(Q_{0}^{1},P_{1}^*)-\hat{\Phi}_{\varepsilon,L}^{\circ}(Q_{0}^{2},P_{1}^*)|\leq\rho_1|Q_{0}^{1}-Q_{0}^{2}| \right\rbrace \\
& \quad\cap \left\lbrace Q_1^1=\hat{\Phi}_{\varepsilon,L}^{\circ}(Q_{0}^{1},P_{1}^*), Q_1^2=\hat{\Phi}_{\varepsilon,L}^{\circ}(Q_{0}^{2},P_{1}^*)\right\rbrace,
\end{align*}
by Fr\'{e}chet's inequality 
\begin{align}\label{eqn:one_step_contraction}
\inf_{q^1,q^2\in S}\mathrm{pr}_{\varepsilon,L}\left\lbrace |Q_1^1-Q_1^2|\leq\rho_1|Q_{0}^{1}-Q_{0}^{2}|\mid (Q_0^1,Q_0^2)=(q^1,q^2)\right\rbrace \geq \omega_1 - \omega_2 > 0.
\end{align}

Consider $\delta>0$, ${u}_0>\inf_{q\in S}U(q)$ and ${u}_1<\sup_{q\in S}U(q)$ with ${u}_0<{u}_1$, and define
the sets $A_{\ell}=L_{\ell}(U_S) \times L_{u_1-\ell}(K)\subset L_{u_1}(\mathcal{E})$ for $\ell\in(u_0,u_1)$. 
As continuity and convexity of $U_S$ imply that it is a closed function, its level sets $L_{\ell}(U_S)$ for $\ell\in(u_0,u_1)$ are closed. Moreover, under the assumptions on $U$ and $S$, it follows that these level sets are compact with positive Lebesgue measure.
To iterate the argument in (\ref{eqn:one_step_contraction}), note first that if $(q,p)\in A_{\ell}$, Property \ref{property:energy} and continuity of $U$ and the mapping $t\mapsto\Phi_{t}^{\circ}(q,p)$ imply that $\Phi_{t}^{\circ}(q,p)\in L_{u_1}(U_S)$ for any $t\in\mathbb{R}_{+}$. 
Due to time discretization, we can only conclude using (\ref{eq:leapfrog_hamiltonian_error}) and compactness of $A_{\ell}$ that there exists $\eta_0>0$ such that
$\hat{\Phi}_{\varepsilon,L}^{\circ}(q,p)\in L_{u_1+\eta_0}(U)$ for all $(q,p)\in A_{\ell}$.
Set $n_0=\inf\{n\geq 1:\rho_1^{n}\sup_{q,q'\in S}|q-q'|\leq \delta\}$
and take $v_0\in(u_0,u_1), k_0>0,\eta_0>0$ small enough such that $v_0+(n_0+1)k_0+n_0\eta_0<u_1$ holds. Then we can conclude that 
\begin{align*}
\inf_{q^1,q^2\in L_{v_0}(U_S)}\mathrm{pr}_{\varepsilon,L}\left\lbrace |Q_{n_0}^1-Q_{n_0}^2|\leq\delta\mid (Q_0^1,Q_0^2)=(q^1,q^2)\right\rbrace \geq (\omega_1 - \omega_2)^{n_0} > 0
\end{align*}
and (\ref{eqn:relaxed_meeting}) follows.
\end{proof}

\section{Proofs of Theorems \ref{thm:relaxed_meeting} and \ref{thm:exact_meeting}}\label{sec:proofthm}
\begin{proof}[Proof of Theorem \ref{thm:relaxed_meeting}]	
For any $\delta>0$, we can apply Proposition \ref{prop:relaxed_meeting} with $u_0=\ell_0$ and any $u_1\in(\ell_0,\sup_{q\in S}U(q))$; the following adopts the notation in the conclusion of Proposition \ref{prop:relaxed_meeting}. This proof follows the arguments in \citet[Proposition 3.4]{jacob2017unbiased} with modifications to suit our setup. 
For $\varepsilon\in(0,\min\{\tilde{\varepsilon},\bar{\varepsilon}\})$ and $L\in\mathbb{N}$ satisfying $\varepsilon L<\bar{\varepsilon} \bar{L}$, it follows from assumption (\ref{eqn:drift}) that the coupled transition kernel $\bar{K}_{\varepsilon,L}$ satisfies the geometric drift condition 
\begin{align*}
	\bar{K}_{\varepsilon,L}(\bar{V})(q,q') \leq \lambda \bar{V}(q,q') + b
\end{align*}
for all $q,q'\in\mathbb{R}^d$ with $\bar{V}(q,q')=\{V(q)+V(q')\}/2$ as the bivariate Lyapunov function. Iterating gives 
$\bar{K}_{\varepsilon,L}^{n_0}(\bar{V})(q,q') \leq \lambda^{n_0} \bar{V}(q,q') + b/(1-\lambda)$.
For $(q,q')\notin L_{\ell_0}(U_S)\times L_{\ell_0}(U_S)$ which implies $(q,q')\notin L_{\ell_1}(V)\times L_{\ell_1}(V)$, we have $\bar{V}(q,q')\geq(1+\ell_1)/2$. Hence 
\begin{align}\label{eqn:niterate_drift}
	\bar{K}_{\varepsilon,L}^{n_0}(\bar{V})(q,q') \leq \lambda_0 \bar{V}(q,q')
\end{align}
with $\lambda_0=\lambda^{n_0}+2b(1-\lambda)^{-1}(1+\ell_1)^{-1}<1$ for all $(q,q')\notin L_{\ell_0}(U_S)\times L_{\ell_0}(U_S)$. Define the subsampled Markov chains $(\tilde{Q}_n^1)_{n\geq 0}, (\tilde{Q}_n^2)_{n\geq 0}$ as $\tilde{Q}_n^1=Q_{n_0n}^1, \tilde{Q}_n^2=Q_{n_0n}^2$ and the corresponding relaxed meeting time as $\tilde{\tau}_{\delta}=\inf\{n\geq 0:|\tilde{Q}_n^1-\tilde{Q}_n^2|\leq \delta\}$. For integers $n,j\geq 0$, consider the decomposition 
\begin{align}\label{eqn:decom}
	\mathrm{pr}_{\varepsilon,L}(\tilde{\tau}_{\delta}>n) = 
		\mathrm{pr}_{\varepsilon,L}(\tilde{\tau}_{\delta}>n, N_{n-1}\geq j) + 
		\mathrm{pr}_{\varepsilon,L}(\tilde{\tau}_{\delta}>n, N_{n-1}< j)
\end{align}
where $N_n$ denotes the number of times the coupled chain $(\tilde{Q}_k^1,\tilde{Q}_k^2)_{k\geq0}$ visits $L_{\ell_0}(U_S)\times L_{\ell_0}(U_S)$ by time $n$ (with $N_{-1}=0$). For the first term, it follows from (\ref{eqn:relaxed_meeting}) that 
\begin{align}\label{eqn:first_term}
	\mathrm{pr}_{\varepsilon,L}(\tilde{\tau}_{\delta}>n, N_{n-1}\geq j) \leq (1-\omega)^j.
\end{align}
To bound the second term, we define 
\begin{align}\label{eqn:B}
	B = \max\left\lbrace1, \frac{1}{\lambda_0}\sup_{(q,q')\in L_{\ell_0}(U_S)\times L_{\ell_0}(U_S)}\frac{\bar{K}_{\varepsilon,L}^{n_0}(\bar{V})(q,q')}{\bar{V}(q,q')}\right\rbrace
	\leq \frac{1}{\lambda_0}\left\lbrace \lambda^{n_0}+\frac{b}{1-\lambda}\right\rbrace
\end{align}
and apply Markov's inequality to obtain 
\begin{align}
	\mathrm{pr}_{\varepsilon,L}\left(\tilde{\tau}_{\delta}>n, N_{n-1}< j\right) &\leq 
	\mathrm{pr}_{\varepsilon,L}\left\lbrace\mathbb{I}_{D_{\delta}^c}(\tilde{Q}_n^1,\tilde{Q}_n^2)B^{-N_{n-1}}\geq B^{-(j-1)} \right\rbrace \notag\\
	&\leq B^{j-1}{E}_{\varepsilon,L}\left\lbrace \mathbb{I}_{D_{\delta}^c}(\tilde{Q}_n^1,\tilde{Q}_n^2)B^{-N_{n-1}}\right\rbrace \notag\\
	&\leq B^{j-1}{E}_{\varepsilon,L}\left\lbrace B^{-N_{n-1}}\bar{V}(\tilde{Q}_n^1,\tilde{Q}_n^2)\right\rbrace \notag\\
	&=\lambda_0^nB^{j-1}{E}_{\varepsilon,L}\left\lbrace M_n\right\rbrace\label{eqn:second_term}
\end{align}
where $M_n=\lambda_0^{-n}B^{-N_{n-1}}\bar{V}(\tilde{Q}_n^1,\tilde{Q}_n^2)$. 
Let $\mathcal{F}_n$ denote the $\sigma$-algebra generated by the random variables 
$(\tilde{Q}_k^1,\tilde{Q}_k^2)_{0\leq k\leq n}$.
We now establish that $(M_n, \mathcal{F}_n)_{n\geq 0}$ is a super-martingale. 
Suppose $(\tilde{Q}_n^1,\tilde{Q}_n^2)\notin L_{\ell_0}(U_S)\times L_{\ell_0}(U_S)$, in which case $N_n=N_{n-1}$, and 
applying (\ref{eqn:niterate_drift}) gives
\begin{align*}
	{E}_{\varepsilon,L}\left\lbrace M_{n+1}\mid\mathcal{F}_n\right\rbrace 
	&= \lambda_0^{-n-1}B^{-N_{n-1}}{E}_{\varepsilon,L}\left\lbrace \bar{V}(\tilde{Q}_{n+1}^1,\tilde{Q}_{n+1}^2)\mid 
	\tilde{Q}_n^1,\tilde{Q}_n^2\right\rbrace \leq M_n.
\end{align*}
For other case $(\tilde{Q}_n^1,\tilde{Q}_n^2)\in L_{\ell_0}(U_S)\times L_{\ell_0}(U_S)$, we have $N_n=N_{n-1}+1$  hence it follows from (\ref{eqn:B}) that
\begin{align*}
	{E}_{\varepsilon,L}\left\lbrace M_{n+1}\mid\mathcal{F}_n\right\rbrace 
	&= \lambda_0^{-n}B^{-N_{n-1}-1}\bar{V}(\tilde{Q}_n^1,\tilde{Q}_n^2)
	\frac{{E}_{\varepsilon,L}\left\lbrace \bar{V}(\tilde{Q}_{n+1}^1,\tilde{Q}_{n+1}^2)\mid\tilde{Q}_n^1,\tilde{Q}_n^2\right\rbrace}
	{\lambda_0\bar{V}(\tilde{Q}_n^1,\tilde{Q}_n^2)}
	\leq M_n.
\end{align*}
By the super-martingale property and assumption (\ref{eqn:drift}), ${E}_{\varepsilon,L}\{M_n\}\leq {E}_{\varepsilon,L}\{M_0\}\leq\{(\lambda+1)\pi_0(V)+b\}/2$. Therefore combining (\ref{eqn:decom}), (\ref{eqn:first_term}), (\ref{eqn:second_term}) and noting that the relaxed meeting times satisfy $\{\tau_{\delta}>n_0n\}\subseteq\{\tilde{\tau}_{\delta}>n\}$ give
\begin{align*}
	\mathrm{pr}_{\varepsilon,L}(\tau_{\delta}>n_0n)\leq \mathrm{pr}_{\varepsilon,L}(\tilde{\tau}_{\delta}>n) \leq (1-\omega)^j + 
	\frac{1}{2}\{(\lambda+1)\pi_0(V)+b\}\lambda_0^nB^{j-1}.
\end{align*}
Since $\lambda_0<1$, there exists $m_0\in\mathbb{N}$ such that $\lambda_0B^{1/m_0}<1$. For integer $n\geq m_0$, we can choose $j=\lceil n/m_0\rceil$ to obtain 
\begin{align*}
	\mathrm{pr}_{\varepsilon,L}(\tau_{\delta}>n_0n)\leq 
	\{(1-\omega)^{1/m_0}\}^n + 
	\frac{1}{2}\{(\lambda+1)\pi_0(V)+b\}(\lambda_0B^{1/m_0})^n
\end{align*}
which implies (\ref{eqn:relaxedmeeting_tails}).
\end{proof}

\begin{proof}[Proof of Theorem \ref{thm:exact_meeting}]
For any $\delta>0$, we can apply Proposition \ref{prop:relaxed_meeting} with $u_0=\ell_0$ and any $u_1\in(\ell_0,\sup_{q\in S}U(q))$; the following adopts the notation in the conclusion of Proposition \ref{prop:relaxed_meeting}.
Suppose that the coupled chain $(X_n,Y_n)_{n\geq 0}$ is initialized at 
$(X_0,Y_0)=(x,y)\in S_0\times S_0$ and evolves according to $(X_n,Y_n)\sim\bar{K}_{\varepsilon,L,\sigma}\{(X_{n-1},Y_{n-1}),\cdot\}$ for all integer $n\geq 1$, $\sigma>0$ and some $\varepsilon\in(0,\bar{\varepsilon}), L\in\mathbb{N}$ satisfying $\varepsilon L<\bar{\varepsilon}\bar{L}$ (note that this differs from the time shift presented in Algorithm \ref{alg:coupled_mixture}).
Let $\{I_n=1\}$ denote the event that the coupled Hamiltonian Monte Carlo kernel is sampled from the mixture (\ref{eqn:coupled_mixture}) at time $n$, i.e. $(I_n)_{n\geq1}$ is a sequence of independent Bernoulli random variables with probability of success $1-\gamma\in(0,1)$. By conditioning on the event $\cap_{n=1}^{n_0}\{I_n=1\}$, it follows from the proof of Proposition \ref{prop:relaxed_meeting} that there exist $n_0\in\mathbb{N}$ and $\omega\in(0,1)$ such that
\begin{align}\label{eqn:select_hmc}
	\inf_{x,y\in S_0}\mathrm{pr}_{\varepsilon,L,\sigma}\left\lbrace 
	(X_{n_0},Y_{n_0})\in D_{\delta}\cap S\times S \mid (X_0,Y_0)=(x,y)\right\rbrace \geq (1-\gamma)^{n_0}\omega.
\end{align}

Now conditioning on the events $\{(X_{n_0},Y_{n_0})\in D_{\delta}\cap S\times
S\}$ and $\{I_{n_0+1}=0\}$, for any $\sigma>0$ and $\theta_1\in(0,1)$, the
approximation (\ref{eqn:TV_approx}) allows us to select $\delta>0$ small enough
so that the maximal coupling within the coupled random walk
Metropolis--Hastings kernel $\bar{K}_{\sigma}$ proposes the same value
$X_{n_0+1}^*=Y_{n_0+1}^*$ with probability at least $1-\theta_1$. For any
$\theta_2\in (0,1)$, we now establish that the probability of accepting the proposed
value satisfies 
\begin{align}\label{eqn:controlling_acceptprob}
	\mathrm{pr}_{\varepsilon,L,\sigma}\left\lbrace X_{n_0+1}=X_{n_0+1}^* \mid 
	I_{n_0+1}=0, (X_{n_0},Y_{n_0})\in D_{\delta}\cap S\times S, (X_0,Y_0)=(x,y)\right\rbrace \geq 1-\theta_2
\end{align}
if $\sigma>0$ is sufficiently small. We can rewrite the above probability as   
\begin{align*}
	\mathrm{pr}_{\varepsilon,L,\sigma}\left\lbrace U_{n_0+1}\leq \min\left[1,\frac{\pi(X_{n_0}+\sigma Z_{n_0+1})}{\pi(X_{n_0})}\right] \mid 
	(X_{n_0},Y_{n_0})\in D_{\delta}\cap S\times S, (X_0,Y_0)=(x,y)\right\rbrace
\end{align*}
where $U_{n_0+1}\sim\mathcal{U}[0,1]$ and $Z_{n_0+1}=X_{n_0+1}^*/\sigma\sim\mathcal{N}(0,I_d)$ are independent.  
By Assumption \ref{ass:potential}, we have
\begin{align*}
\min\left[1,\frac{\pi(v+\sigma z)}{\pi(v)}\right]\geq\min\left[1, \exp\left\lbrace -\frac{1}{2}\sigma^2\beta|z|^2-\sigma\nabla U(v)^{\top}z\right\rbrace\right]
\end{align*}
for all $v,z\in\mathbb{R}^d$.  
Define $\varphi_1(\sigma,v,z)=\exp\{-\sigma^2\beta|z|^2/2\}$, $\varphi_2(\sigma,v,z)=\exp\{-\sigma\nabla U(v)^{\top}z\}$ and $B_0(r)=\{z\in\mathbb{R}^d : |z|\leq r\}$ for some $r>0$. Note that for each $(v,z)\in S\times B_0(r)$ and $i=1,2$, $\sigma\mapsto\varphi_i(\sigma,v,z)$ is a monotone function and $\lim_{\sigma\rightarrow0}\varphi_i(\sigma,v,z)=1$. Since $S\times B_0(r)$ is compact and $\nabla U$ is continuous, it follows from Dini's theorem that $\lim_{\sigma\rightarrow0}\inf_{v\in S, z\in B_0(r)}\varphi_i(\sigma,v,z)=1$. By conditioning on 
the events $\{X_{n_0}\in S\}$ and $\{Z_{n_0+1}\in B_0(r)\}$, we have 
\begin{align*}
	\left\lbrace U_{n_0+1}\leq \min\left[1,\frac{\pi(X_{n_0}+\sigma Z_{n_0+1})}{\pi(X_{n_0})}\right]\right\rbrace \supseteq \left\lbrace U_{n_0+1}\leq \min\left[1,\prod_{i=1}^2\inf_{v\in S, z\in B_0(r)}\varphi_i(\sigma,v,z)\right]\right\rbrace.
\end{align*}
The claim in (\ref{eqn:controlling_acceptprob}) follows by taking $r>0$ sufficiently large and $\sigma>0$ sufficiently small.
Therefore by symmetry of the coupled chains and Fr\'{e}chet's inequality, for any $\theta\in(0,1)$, there exists $\bar{\sigma}>0$ such that for any $\sigma\in(0,\bar{\sigma})$
\begin{align}\label{eqn:meet_with_RWMH}
	\mathrm{pr}_{\varepsilon,L,\sigma}\left\lbrace X_{n_0+1}=Y_{n_0+1} \mid 
	I_{n_0+1}=0, (X_{n_0},Y_{n_0})\in D_{\delta}\cap S\times S, (X_0,Y_0)=(x,y)\right\rbrace \geq 1-\theta.
\end{align}

Combining (\ref{eqn:select_hmc}) with (\ref{eqn:meet_with_RWMH}) gives 
\begin{align}\label{eqn:couldmeet_mixture}
	\inf_{x,y\in S_0}\bar{K}_{\varepsilon,L,\sigma}^{n_0+1}\{(x,y), {D}\} \geq (1-\gamma)^{n_0}\omega\gamma(1-\theta)>0
\end{align}
for $\varepsilon\in(0,\bar{\varepsilon}), L\in\mathbb{N}$ satisfying $\varepsilon L<\bar{\varepsilon}\bar{L}$ and $\sigma\in(0,\bar{\sigma})$, where $D=\{(x,y)\in\mathbb{R}^d\times\mathbb{R}^d : x=y\}$.
With (\ref{eqn:couldmeet_mixture}), the claim in (\ref{eqn:exact_meetingtime}) follows using the same arguments in the proof of Theorem \ref{thm:relaxed_meeting} since the marginal mixture kernel $K_{\varepsilon,L,\sigma}$ satisfies the geometric drift condition 
\begin{align*}
	K_{\varepsilon,L,\sigma}(V)(x) &= (1-\gamma)K_{\varepsilon,L}(V)(x)+\gamma K_{\sigma}(V)(x) \\
	&\leq (1-\gamma)\{\lambda V(x)+b\} + \gamma\{Q_{\sigma}(V)(x)+V(x)\} \\
	&\leq\lambda_0V(x) + b_0
\end{align*}
for all $x\in\mathbb{R}^d$ and $\sigma\in(0,\min\{\tilde{\sigma},\bar{\sigma}\})$, where $\lambda_0=(1-\gamma)\lambda+\gamma(1+\mu)\in(0,1)$ and $b_0=(1-\gamma)b+\gamma\mu<\infty$.

\end{proof}

\section{Verifying assumptions of Theorems \ref{thm:relaxed_meeting} and \ref{thm:exact_meeting} \label{sec:check_assumptions}}
\subsection{Model}
We consider the posterior distribution of regression coefficients $q\in\mathbb{R}^d$, arising from Bayesian logistic regression 
with observations $y\in\{0,1\}^N$ and 
a Gaussian prior distribution $\mathcal{N}(0,\zeta^{-1}\Sigma)$, where $\zeta>0$ controls the strength of the prior shrinkage toward zero. 
We will write the $n=1,\ldots,N$ row of the design matrix $X\in\mathbb{R}^{N\times d}$ as $x_n\in\mathbb{R}^d$. 

\subsection{Assumptions \ref{ass:potential}--\ref{ass:convexity}}
In the above setup, the potential has the form 
\begin{align*}
	U(q)=\frac{\zeta}{2}q^{\top}\Sigma^{-1} q + y^{\top}Xq + \sum_{n=1}^N\log\{1+\exp(-x_n^{\top}q)\} 
\end{align*}
which is infinitely differentiable. Its derivatives are given by 
\begin{align*}
	\nabla U(q)= \zeta\Sigma^{-1} q + X^{\top}y - \sum_{n=1}^N\frac{x_n}{1+\exp(x_n^{\top}q)} 
\end{align*}
and 
\begin{align*}
	\nabla^2U(q) = \zeta\Sigma^{-1} + \sum_{n=1}^N\frac{\exp(x_n^{\top}q)x_nx_n^{\top}}{\{1+\exp(x_n^{\top}q)\}^2}.
\end{align*}
The spectral norm of its Hessian can be bounded by 
\begin{align*}
\zeta\nu_{\min}(\Sigma^{-1})\leq \|\nabla^2U(q)\|_2\leq \nu_{\max}(\zeta\Sigma^{-1}+4^{-1}N\Sigma_X)
\end{align*}
for all $q\in\mathbb{R}^d$, where $\nu_{\min}(A)$ and $\nu_{\max}(A)$ denote the smallest and largest eigenvalues of a matrix $A\in\mathbb{R}^{d\times d}$ respectively, and $\Sigma_X=N^{-1}\sum_{n=1}^Nx_nx_n^{\top}$ is the Gram matrix. Therefore Assumption 
\ref{ass:potential} is satisfied with $\beta=\nu_{\max}(\zeta\Sigma^{-1}+4^{-1}N\Sigma_X)$ and Assumption \ref{ass:convexity} is 
satisfied on any compact set $S$ with $\alpha=\zeta\nu_{\min}(\Sigma^{-1})$. 
If we select $\Sigma=\Sigma_X^{-1}$, as considered in \citet[Example 2]{dalalyan2017theoretical}, then $\beta=(\zeta+N/4)\nu_{\max}(\Sigma_X)$ and $\alpha=\zeta\nu_{\min}(\Sigma_X)$.

\subsection{Geometric drift condition of Hamiltonian Monte Carlo kernel}
To establish that the marginal Hamiltonian Monte Carlo kernel satisfies a geometric drift condition (\ref{eqn:drift}), 
we will appeal to \citet[Theorem 9]{durmus2017convergence} which gives sufficient conditions \citet[Assumption H2($m$)]{durmus2017convergence} on the potential $U$ for geometric ergodicity. 
We will check the assumptions of \citet[Proposition 6]{durmus2017convergence} to verify \citet[Assumption H2($m$)]{durmus2017convergence}. To do so, we decompose the potential as $U(q)=U_0(q)+G(q)$ with 
\begin{align*}
	U_0(q) = \frac{\zeta}{2}q^{\top}\Sigma^{-1} q, \quad G(q) = y^{\top}Xq + \sum_{n=1}^N\log\{1+\exp(-x_n^{\top}q)\}.
\end{align*}

Firstly, $U_0$ and $G$ are infinitely differentiable. Secondly, $U_0$ satisfies $\lim_{|q|\rightarrow\infty}U_0(q)=\infty$, 
is homogeneous of degree $2$ and quasi-convex on 
$\mathbb{R}^d$; see discussion above \citet[Proposition 6]{durmus2017convergence} for precise definitions. 
Lastly, we need to show 
\begin{align}\label{eqn:tail_of_differential}
\lim_{|q|\rightarrow\infty}\|D^2G(q)\|=0\quad\mbox{ and }\quad
\lim_{|q|\rightarrow\infty}\|D^3G(q)\|\cdot|q|=0
\end{align}
where $D^k$ denotes the $k$ differential of $G$ and $\|D^kG\|$ is the operator norm of $D^k$ seen as a 
linear map from the $k$-fold product space $\mathbb{R}^d\times\cdots\times\mathbb{R}^d$ to $\mathbb{R}$. 
Let $|u|_{\infty}=\max_{i=1,\ldots,d}|u_i|$ denote the maximum norm for $u=(u_1,\ldots,u_d)\in\mathbb{R}^d$ 
and equip the product space with the norm $\|u\|_{k}=\max_{i=1,\ldots,k}|u_i|_{\infty}$ for 
$u=(u_1,\ldots,u_k)\in\mathbb{R}^d\times\cdots\times\mathbb{R}^d$. Note first that 
\begin{align}\label{eqn:partials_G}
\partial_i\partial_jG(q)=\sum_{n=1}^N\frac{\exp(x_n^{\top}q)x_{ni}x_{nj}}{\{1+\exp(x_n^{\top}q)\}^2}, \quad 
\partial_i\partial_j\partial_kG(q)= \sum_{n=1}^N\frac{\{\exp(x_n^{\top}q)-\exp(2x_n^{\top}q)\}x_{ni}x_{nj}x_{nk}}{\{1+\exp(x_n^{\top}q)\}^3},
\end{align}
where $\partial_if$ denotes the partial derivative of $f:\mathbb{R}^d\rightarrow\mathbb{R}$ with respect to 
the $i\in\{1,\ldots,d\}$ coordinate and $x_{ij}$ denotes the $(i,j)\in\{1,\ldots,d\}^2$ element of $X$. 
For $z=(u,v)\in\mathbb{R}^d\times\mathbb{R}^d$ with $u=(u_1,\ldots,u_d)$ and $v=(v_1,\ldots,v_d)$ in $\mathbb{R}^d$, 
we have 
\begin{align*}
\left|\sum_{i=1}^d\sum_{j=1}^du_iv_j\partial_i\partial_jG(q) \right|\leq |u|_{\infty}|v|_{\infty}
\sum_{i=1}^d\sum_{j=1}^d|\partial_i\partial_jG(q)|\leq \|z\|_2\sum_{i=1}^d\sum_{j=1}^d|\partial_i\partial_jG(q)|.
\end{align*}
Hence $\|D^2G(q)\|\leq\sum_{i=1}^d\sum_{j=1}^d|\partial_i\partial_jG(q)|$ and the same argument also gives 
$\|D^3G(q)\|\leq\sum_{i=1}^d\sum_{j=1}^d\sum_{k=1}^d|\partial_i\partial_j\partial_kG(q)|$. 
The claim (\ref{eqn:tail_of_differential}) then follows from the tail behaviour of (\ref{eqn:partials_G}).

Having established \citet[Assumption H2($m$)]{durmus2017convergence}, we apply 
\citet[Proposition 7]{durmus2017convergence} to conclude that the proposal Markov transition kernel with 
time discretized Hamiltonian dynamics, defined as 
\begin{align*}
	P_{\varepsilon,L}(q,A)=\int_{\mathbb{R}^d}\mathbb{I}_A\{\hat{\Phi}_{\varepsilon,L}^{\circ}(q,p)\}\mathcal{N}(p;0_d,I_d)dp
\end{align*}
for $q\in\mathbb{R}^d$ and $A\in\mathcal{B}(\mathbb{R}^d)$, satisfies a geometric drift condition, i.e. 
there exists $\tilde{\varepsilon}>0$ such that for any $\varepsilon\in(0,\tilde{\varepsilon})$ and $L\in\mathbb{N}$, 
there exist $a>0$, $\lambda_P\in(0,1)$ and $b_P>0$ such that 
\begin{align*}
	P_{\varepsilon,L}(V)(q)\leq\lambda_PV(q)+b_P
\end{align*}
for all $q\in\mathbb{R}^d$ with 
\begin{align}\label{eqn:explicit_lyapunov}
	V(q)=\exp(a|q|).
\end{align}
By \citet[Proposition 5]{durmus2017convergence} which also holds under \citet[Assumption H2($m$)]{durmus2017convergence}, 
the geometric drift condition for the proposal kernel implies a geometric drift condition for resulting 
Hamiltonian Monte Carlo kernel, i.e. for all $\varepsilon\in(0,\tilde{\varepsilon})$ and $L\in\mathbb{N}$, there exist 
$a>0$, $\lambda\in(0,1)$ and $b>0$ such that 
\begin{align*}
	K_{\varepsilon,L}(V)(q)\leq \lambda V(q)+b
\end{align*}
for all $q\in\mathbb{R}^d$. Note that we retain the same explicit Lyapunov function (\ref{eqn:explicit_lyapunov}) 
which will be needed in the following. The finite moment condition $\pi_0(V)$ holds for initial distributions with sufficiently 
light tails such as Gaussian distributions. 

\subsection{Excursions from convexity set in Theorem \ref{thm:relaxed_meeting}}
Since Assumption \ref{ass:convexity} is satisfied on any compact set, we can take $S=B_0(r)=\{q\in\mathbb{R}^d:|q|\leq r\}$ 
with $r>0$ arbitrarily large. Under the Lyapunov function (\ref{eqn:explicit_lyapunov}), the level set conditions in 
Theorem \ref{thm:relaxed_meeting} can be rewritten as 
\begin{align}\label{eqn:levelset_thm1}
B_0(a^{-1}\log\ell_1)\subseteq B_0(r)\cap L_{\ell_0}(U)
\end{align}
for some $\ell_0\in(\inf_{q\in S}U(q),\sup_{q\in S}U(q))$ and $\ell_1>1$ satisfying $\lambda+2b(1-\lambda)^{-1}(1+\ell_1)^{-1}<1$. 
With $\lambda\in(0,1)$ and $b>0$ fixed, the last inequality requires $\ell_1$ to be sufficiently large. 
As $\lim_{|q|\rightarrow\infty}U(q)=\infty$, the latter can be done without violating (\ref{eqn:levelset_thm1}) since we 
can choose an arbitrarily large $\ell_0$ by taking $r$ sufficiently large. 
This completes the verification of the assumptions required in Theorem \ref{thm:relaxed_meeting}.

\subsection{Remaining assumptions in Theorem \ref{thm:exact_meeting}}
For the Gaussian random walk kernel and the Lyapunov function (\ref{eqn:explicit_lyapunov}), it follows 
by a change of variables and the triangle inequality that 
\begin{align*}
	Q_{\sigma}(V)(q)=\int_{\mathbb{R}^d}\exp(a|q'|)\mathcal{N}(q';q,\sigma^2I_d)dq' 
				  &=\int_{\mathbb{R}^d}\exp(a|q+q'|)\mathcal{N}(q';0,\sigma^2I_d)dq' \\
				  &\leq \exp(a|q|)\int_{\mathbb{R}^d}\exp(a|q'|)\mathcal{N}(q';0,\sigma^2I_d)dq' 
\end{align*}
for any $\sigma>0$ and $q\in\mathbb{R}^d$. Therefore we can take $\mu=\int_{\mathbb{R}^d}\exp(a|q'|)\mathcal{N}(q';0,\sigma^2I_d)dq'$. 
With $\mu$ and $\lambda\in(0,1)$ fixed, we can define $\lambda_0=(1-\gamma)\lambda+\gamma(1+\mu)<1$ by taking $\gamma\in(0,1)$ 
small enough. Fixing also $\gamma$ and $b>0$, the level set conditions in Theorem \ref{thm:exact_meeting} hold using 
the same arguments as the previous section. This completes verifying the assumptions required in Theorem \ref{thm:exact_meeting}.
Lastly, we note that minor modifications of the above arguments would also show that the assumptions of Theorems \ref{thm:relaxed_meeting}--\ref{thm:exact_meeting} hold for any multivariate Gaussian target distribution.

\end{document}